\newtheorem{theorem}{Theorem}
\newtheorem{lemma}[theorem]{Lemma}
\newtheorem{definition}[theorem]{Definition}
\newtheorem{corollary}[theorem]{Corollary}
\newtheorem{remark}{Remark}
\newtheorem{condition}{Condition}
\newtheoremstyle{algstyle}%
  {10mm}       
  {10mm}       
  {\tt}   
  {0pt}        
  {\bfseries}  
  {\newline}   
  {10mm}       
  {\thmname{#1}\thmnumber{ #2}\thmnote{ (#3)}}          
\theoremstyle{algstyle}
\newtheorem{algorithm}{Algorithm}
\newtheoremstyle{algdashstyle}%
  {10mm}       
  {10mm}       
  {\tt}   
  {0pt}        
  {\bfseries}  
  {\newline}   
  {10mm}       
  {\thmname{#1}\thmnumber{ #2}$'$\thmnote{ (#3)}}          
\theoremstyle{algdashstyle}
\newcommand{\qedsymbl}{\rule{2mm}{2mm}}
\newcommand{\nw}[1]{%
\textbf{#1}%
}
\newcommand{\mnw}[1]{%
\boldsymbol{#1}%
}
\newcommand{\bbmatrix}[1]{%
\begin{bmatrix} #1 \end{bmatrix}%
}
\newcommand{\ppmatrix}[1]{%
\begin{pmatrix} #1 \end{pmatrix}%
}
\newcommand{\mydot}[1]{%
\stackrel{\text{\Large .}}{#1}%
}
\newcommand{\lrar}{\leftrightarrow}
\newcommand{\V}{\mbox{$\cal V$}} 
\newcommand{\F}{\mbox{$\cal F$}} 
\newcommand{\dsw}{{\dot{w}}}            
\newcommand{\Vuy}{{\cal V}_{UY}}            
    \newcommand{\0}{{\mathbf 0}}        
 \newcommand{\Vab}{{\cal V}_{AB}}   			
\newcommand{\Vabc}{{\cal V}_{ABC}} 
\newcommand{\Vs}{{\cal V}_{S}}             
\newcommand{\Vsp}{{\cal V}_{SP}}           			
\newcommand{\Vsq}{{\cal V}_{SQ}}    
\newcommand{\Vp}{{\cal V}_{P}}              
\newcommand{\Vpq}{{\cal V}_{PQ}}            			
\newcommand{\Vwdwuy}{{\cal V}_{W\mydot{W}UY}}    
\newcommand{\Vwdw}{{\cal V}_{W\mydot{W}}}  			
\newcommand{\wdw}{{W\mydot{W}}}  			
\newcommand{\dw}{{\mydot{W}}}  			
\newcommand{\KSP}{\mbox{${\cal K}_{SP}$}}    		
\newcommand{\KSQ}{\mbox{${\cal K}_{SQ}$}}    		
\newcommand{\KPQ}{\mbox{${\cal K}_{PQ}$}}    		
\newcommand{\f}{\mbox{${\bf f}$}}        				
\newcommand{\fS}{\mbox{${\bf f}_{S}$}}  				
\newcommand{\fP}{\mbox{${\bf f}_{P}$}}  				
\newcommand{\fQ}{\mbox{${\bf f}_{Q}$}}  				
\begin{document}

\begin{frontmatter}



\title{On the linear static output feedback problem: the annihilating polynomial approach}

\author[hn]{H. Narayanan \corref{cor1}}
\ead{hn@ee.iitb.ac.in}
\cortext[cor1]{Corresponding author}
\author[hari]{Hariharan Narayanan}
\ead{hariharan.narayanan@tifr.res.in}
\address[hn]{Department of Electrical Engineering, Indian Institute of Technology Bombay}
\address[hari]{School of Technology and Computer Science, Tata Institute of Fundamental Research}

\begin{abstract}
One of the fundamental open problems in control theory 
is that of the stabilization of a linear time invariant dynamical system
through static output feedback. We are given a linear dynamical system defined through
\begin{align*}
 \mydot{w} &= Aw + Bu \\
 y &= Cw .
\end{align*}
The problem is to find, if it exists, a feedback $u=Ky$ such that 
the matrix $A+BKC$ has all its eigenvalues in the complex left half plane
and, if such a feedback does not exist, to prove that it does not.
Substantial progress has not been made on the computational aspect 
of the solution
to this problem.

In this paper we consider instead `which annihilating polynomials can a matrix
of the form $A+BKC$ possess?'. 

We give a simple solution to this problem 
when the system has either a single input or a single output.
For the multi input - multi output case, 
we use these ideas
to 
characterize the annihilating polynomials
when $K$ has rank one, and
suggest possible computational solutions for general $K.$
We also present some numerical evidence 
for the plausibility of this approach for the general case as well
as for the problem of shifting the eigenvalues of the system.

\end{abstract}

\begin{keyword}
Linear Dynamical Systems, Output feedback, Implicit Inversion Theorem,
Implicit Linear Algebra, Behavioural System Theory

\MSC  15A03, 15A04, 15A18, 93B52, 93B55, 93C05, 93C35, 93D99

\end{keyword}

\end{frontmatter}


\section{Introduction}
The linear static output feedback problem has been open for more than
five decades. Its importance has been stressed in many references \cite{bel},\cite{brocket},\cite{kalman},\cite{skogestad},\cite{sontag}.
While substantial progress has been made in theoretical 
understanding of this problem, from a computational point of view,
the situation is still unsatisfactory.
In this paper we suggest a technique 
for altering through output feedback, the coefficient vector of the annihilating polynomial of the state transition matrix. 
This technique is a simple modification of  old ideas
of state feedback for shifting poles \cite{Wonham1967},\cite{Wonham1978}. 
It shows promise also as a way of altering the roots of the annihilating
polynomials to more desirable locations.

\subsection{The output feedback problem}
Let
\begin{align}
\label{eqn:state111}
 \mydot{w} &= Aw + Bu 
\end{align}
\begin{align}
\label{eqn:state222}
 y &= Cw .
\end{align}
be a set of equations where $A,B,C$ respectively are $ n\times n,n\times m$ and $p\times n$
real matrices.

{\it Find, if it exists, a real matrix $K,$ such that the matrix $A+BKC$
has a specified set of eigenvalues, and if such a $K$ does not exist, prove that it does not.}

 Note that we would get $\mydot{w}=(A+BKC)w,$ if we set $u=Ky,$ 
which is called linear static output feedback.

In this paper, we consider instead, the following problem.

{\it Find, if it exists, a real matrix $K,$ such that the matrix $A+BKC$
has a specified annihilating polynomial, and if such a $K$ does not exist, prove that it does
not.} 

We solve this problem for the case where there is either a single 
input or a single output. For the multi-input multi-output case, we suggest
a variation which involves repeated application of the solution to the 
single input case.

Most of the literature on the output feedback problem, quite naturally, is devoted
to altering roots rather than coefficient vector of the annihilating polynomial.
Detailed surveys are available in \cite{sadabadi}, \cite{syrmos}.
An insightful historical summary is given in \cite{bel}.

Some of the papers which exemplify the different approaches are 
\begin{itemize}
\item the proof that the case $n<m+p$ is generically solvable
\cite{kimura},
necessary and sufficient conditions for solvability of this case
when the eigenvalues specified are distinct \cite{fletcher};
\item use of exterior algebra to solve the case $mp>n$ \cite{brocket}
\cite{ravi},\cite{wang1},\cite{wang2};
\item use of linear matrix inequalities and Lyapunov functions method \cite{cao},\cite{rosinova},\cite{scherer},\cite{skogestad},\cite{sontag},\cite{vesely}
.
\end{itemize}

In \cite{bel}
a new approach to the choice of an initial approximation for iterative procedures resolving the feedback design problem, is presented. The method appears useful when
$m+p\ \leq \ n.$

We now give a brief sketch of the ideas in this paper.

The main object which we use for modifying matrices is a
`full Krylov sequence'. A sequence $\omega \equiv (w^0, \cdots , w^n)$ of $n\times 1$ vectors is said to be be a full Krylov sequence for an $n\times n$ matrix $A,$
iff $w^j=A^jw^0, j= 1, \cdots n$ and $w^0, \cdots , w^{n-1}$
are linearly independent. 

Because we can restrict our transformations of the matrices in question
to the controllable space, it turns out that every matrix in question 
has a full Krylov sequence. (In particular, the characteristic polynomial
and annihilating polynomial are the same for such matrices.)

Let us consider a system described by the equations \ref{eqn:state111}
and \ref{eqn:state222}.
The classical state feedback of \cite{Wonham1967}  can be thought
of as a modification of a full Krylov sequence $(w^0, \cdots , w^n)$ 
to another  $(v^0, \cdots , v^n),$
through the rule
\begin{align}
\label{eqn:basic_eqnintro}
\ppmatrix{
v^0|& \cdots &|v^{n} 
}
\bbmatrix{ 1  & \sigma_{1}& \cdots & \cdots &\sigma_{n}\\
0&1 & \cdots & \cdots &\sigma_{n-1}\\
\vdots & \vdots &\vdots &\vdots& \vdots  \\
0&  0 & \cdots & 0& 1 
}
=\ppmatrix{
w^0|& \cdots &|w^{n} 
}.
\end{align}
If $s^n+ d_1s^{n-1}+\cdots +d_n,$
$s^n+  b_1s^{n-1}+\cdots +b_n,$
are respectively the annihilating polynomials of the matrices
with full Krylov sequences $(w^0, \cdots , w^n),$
 $(v^0, \cdots , v^n),$
then  we must have 
\begin{align}
\label{eqn:last_columnintro}
\bbmatrix{ 1  & \sigma_{1}& \cdots & \cdots &\sigma_{n}\\
0&1 & \cdots & \cdots &\sigma_{n-1}\\
\vdots & \vdots &\vdots &\vdots& \vdots  \\
0&  0 & \cdots & 0& 1 
}
\ppmatrix{
d_{n}\\   d_{n-1} \\\vdots \\1 
} = \ppmatrix{
b_{n}\\   b_{n-1} \\\vdots \\1 
}.
\end{align}
Further, 
one can show that the condition in equation \ref{eqn:last_columnintro}
is actually  equivalent to the one in  equation \ref{eqn:basic_eqnintro}.

The above corresponds to the matrix $A$ being transformed through
state feedback to $A+BF,$ where the state feedback matrix $F$ has rank one.
The output feedback transforms the matrix $A$ to the matrix $A+BKC,$
where $K$ is the output feedback matrix. 
When $K$ has rank one it corresponds to those state feedbacks 
where $(\sigma _1, \cdots , \sigma _n)\in row(C[w^0| \cdots |w^{n-1}])$
(Corollary \ref{cor:mainII}).
This essentially characterizes all matrices of the form $A+BKC,$
`reachable'
from $(A,B,C)$ through rank one output feedback. 

Coefficient matrix of the kind in equation \ref{eqn:last_columnintro}
is a triangular Toeplitz matrix and 
is entirely determined by the sequence $(1,\sigma _1, \cdots , \sigma _n).$
It is conveniently manipulated treating it as though it is the  polynomial
$s^n+ \sigma _1s^{n-1}+ \cdots + \sigma _n.$
Multiplication of such matrices corresponds to multiplying the 
corresponding polynomials modulo $s^{n+1}.$ One of the useful consequences
is that equation \ref{eqn:last_columnintro}
can be written as 
\begin{align}
\label{eqn:last_columnintro}
\bbmatrix{ 1  & d_{1}& \cdots & \cdots &d_{n}\\
0&1 & \cdots & \cdots &d_{n-1}\\
\vdots & \vdots &\vdots &\vdots& \vdots  \\
0&  0 & \cdots & 0& 1 
}
\ppmatrix{
\sigma_{n}\\   \sigma_{n-1} \\\vdots \\1 
} = \ppmatrix{
b_{n}\\   b_{n-1} \\\vdots \\1 
}.
\end{align}
A least square solution to this equation under the 
condition that $(\sigma _1, \cdots , \sigma _n)\in row(C[w^0| \cdots |w^{n-1}])$
can be used
to build a sequence $(1,\sigma _1, \cdots , \sigma _n)$
and transform the full Krylov sequence 
$(w^0, \cdots , w^n)$
to $(v^0, \cdots , v^n).$
The latter will have an annihilating polynomial closest in $l_2$
norm of coefficient vectors to $s^n+ b _1s^{n-1}+ \cdots + b_n.$
We call this the rank one update to the original annihilating
polynomial. These ideas can be used in practice to handle
the multi input - multi output case also.

The outline of the paper in terms of sections follows.

Section \ref{sec:Preliminaries}
is on preliminary notation and notions.  It also contains the statement of the useful `implicit inversion theorem'
 (Theorem \ref{thm:inverse}).

Section \ref{sec:outputfb} uses Theorem \ref{thm:inverse}
to give a simple characterization of matrices of the kind $A+BKC$
(Theorem \ref{thm:output}).

Section \ref{sec:rank_one} solves the output feed back 
problem for the case 
 where matrix
$B$ has a single column (Theorem \ref{thm:mainII}).

Section \ref{sec:mimo} presents a practical approach to  handling
multi-input multi-output case using the rank one update method repeatedly
(Algorithm I).

Section \ref{sec:conclusion} presents conclusions.

The Appendix contains a proof of a very general version 
of the implicit inversion theorem,
proofs of Lemma \ref{lem:1}, Lemma \ref{lem:2} 
and also the results of some numerical experiments.
These latter indicate that Algorithm I works in practice
and also that one can use the methods of this paper to alter
eigenvalues to desired locations.


\section{Preliminaries}
\label{sec:Preliminaries}
The preliminary results and the notation used are from \cite{HNarayanan1997} (open 2nd edition available at \cite{HNarayanan2009}).
They are needed primarily for the statement of Theorem \ref{thm:inverse}
and for the characterization of output feedback 
in Theorem \ref{thm:output}.

A \nw{vector} $\mnw{f}$ on $X$ over $\mathbb{F}$ is a function $f:X\rightarrow \mathbb{F}$ where $\mathbb{F}$ is a field. In this paper we work primarily with the real field.  When $X$, $Y$ are disjoint, a vector $f_{X\uplus  Y}$ on $X\uplus Y$ would be written as $f_{XY}$ and would often be written as $(f_X,f_Y)$ during operations dealing with such vectors. The {\bf sets} on which vectors are defined will  always be {\bf finite}. When a vector $x$ figures in an equation, we will use the 
convention that $x$ denotes a column vector and $x^T$ denotes a row vector such as
in $Ax=b,x^TA=b^T.$ Let $f_Y$ be a vector on $Y$ and let $X \subseteq Y$ . Then the \textbf{restriction} of $f_Y$ to $X$ is defined as follows:
\begin{align*}
f_Y/X \equiv g_X, \textrm{ where } g_X(e) = f_Y(e), e\in X.
\end{align*}
When $f$ is on $X$ over $\mathbb{F}$, $\lambda \in \mathbb{F}$ then  $\mnw{\lambda f}$ is on $X$ and is defined by $(\lambda f)(e) \equiv \lambda [f(e)]$, $e\in X$. When $f$ is on $X$ and $g$ on $Y$ and both are over $\mathbb{F}$, we define $\mnw{f+g}$ on $X\cup Y$ by 
\begin{align*}
 (f+g)(e) &\equiv \left\{ \begin{matrix}
                          f(e) + g(e),& e\in X \cap Y\\
			    f(e),& e\in X \setminus Y\\
			    g(e),& e\in Y \setminus X.
                         \end{matrix}
\right.
\end{align*}

A { collection} of vectors on $X$ over $\mathbb{F}$ 
is a \nw{vector space} iff $f,g\in \mathcal{K}$ implies $\lambda f + \sigma g \in \mathcal{K}$ for $\lambda, \sigma \in \mathbb{F}$. We will use the symbol $\mnw{\V_X}$ for vector space on $X.$ 

The symbol $\mnw{\mathcal{F}_X}$ refers to the \nw{collection of all vectors} on $X$ over $\mathbb{F}$ and $\mnw{0_X}$ to the \nw{zero vector space} on $X$ as well as the \nw{zero vector} on $X$. 
When $X_1,\cdots  ,X_r$ are disjoint we usually write $\mnw{\mathcal{V}_{X_1\cdots X_r}}$ in place of $\mnw{\mathcal{V}_{X_1\uplus \cdots \uplus X_r}}$.
The collection
$\{ (f_{X},\lambda f_Y) : (f_{X},f_Y)\in \mathcal{V}_{XY} \}$
is denoted by
$ \mathcal{V}_{X(\lambda Y)}.
$
When $\lambda = -1$ we write more simply $\mathcal{V}_{X(-Y)}.$


The \nw{sum} $\mnw{\mathcal{V}_X+\mathcal{V}_Y}$ of $\mathcal{V}_X$, $\mathcal{V}_Y$ is defined over $X\cup Y$ as follows:
$$ \mathcal{V}_X + \mathcal{V}_Y \equiv  \{  (f_X,0_{Y\setminus X}) + (0_{X\setminus Y},g_Y), \textrm{ where } f_X\in \mathcal{V}_X, f_Y\in \mathcal{V}_Y \},$$
When $X$, $Y$ are disjoint, $\mathcal{V}_X + \mathcal{V}_Y$ is usually written as $\mnw{\mathcal{V}_X \oplus \mathcal{V}_Y}$ and is called the \nw{direct sum}.

The \nw{intersection} $\mnw{\mathcal{V}_X \cap \mathcal{V}_Y}$ of $\mathcal{V}_X$, $\mathcal{V}_Y$ is defined over $X\cup Y$ as follows:
\begin{align*}
\begin{split}
\mathcal{V}_X \cap \mathcal{V}_Y \equiv \{ f_{X\cup Y} &: f_{X\cup Y} = (f_X,x_{Y\setminus X}),\qquad f_{X\cup Y} = (y_{X\setminus Y},f_Y), \\& \textrm{ where } f_X\in\mathcal{V}_X,\qquad f_Y\in\mathcal{V}_Y,   \qquad x_{Y\setminus X},\ y_{X\setminus Y} \\&\textrm{ are arbitrary vectors on }  Y\setminus X,\ X\setminus Y \textrm{ respectively} \}.
\end{split}
\end{align*}
Thus,
$$\mathcal{V}_X \cap \mathcal{V}_Y\equiv (\mathcal{V}_X \oplus \mathcal{F}_{Y\setminus X}) \cap (\mathcal{F}_{X\setminus Y} \oplus \mathcal{V}_Y).$$
The \nw{matched composition} $\mnw{\mathcal{V}_X \leftrightarrow \mathcal{V}_Y}$ is on $(X\setminus Y)\uplus (Y \setminus X)$ and is defined as follows:
\begin{align*}
 \mathcal{V}_X \leftrightarrow \mathcal{V}_Y &\equiv \{
                f : f=(g_{X\setminus Y} , h_{Y \setminus X}), \textrm{ where } g\in \mathcal{V}_X, h\in \mathcal{V}_Y \textrm{ \& } g_{X\cap Y} = h_{Y \cap X}
\}.
\end{align*}
Matched composition can be alternatively written using the above notation
as $$\mathcal{V}_X \leftrightarrow \mathcal{V}_Y \equiv 
(\mathcal{V}_X\cap\mathcal{V}_Y)\circ [(X\setminus Y)\uplus (Y \setminus X)]
= (\mathcal{V}_X+\mathcal{V}_{-Y})\times [(X\setminus Y)\uplus (Y \setminus X)].
$$
Matched composition is referred to as matched sum in \cite{HNarayanan1997}.
In the special case where $Y\subseteq X$, matched composition is called 
\nw{generalized minor} operation (generalized minor of $\mathcal{V}_X $
with respect to $\mathcal{V}_Y$). 

When $X$, $Y$ are disjoint, the matched  composition  corresponds to direct sum.
The operations $\mnw{\mathcal{V}_{XY} \leftrightarrow \mathcal{F}_Y}$, $\mnw{\mathcal{V}_{XY}\leftrightarrow \0_Y}$ are called, respectively, the \nw{restriction and contraction} of $\mathcal{V}_{XY}$ to $X$ and are also denoted by $\mnw{\mathcal{V}_{XY}\circ X}$, $\mnw{\mathcal{V}_{XY} \times X}$, respectively.
Here again $\mnw{\mathcal{V}_{XYZ}\circ XY}$, $\mnw{\mathcal{V}_{XYZ} \times XY}$, respectively
when $X,Y,Z$ are pairwise disjoint, would denote  $\mnw{\mathcal{V}_{XYZ}\circ (X\uplus Y)}$, $\mnw{\mathcal{V}_{XYZ} \times (X \uplus Y)}.$

It is easy to see that  $\mathcal{V}_{XYZ}\circ XY\circ X= \mathcal{V}_{XYZ}\circ X$ and $\mathcal{V}_{XYZ}\times XY\times X= \mathcal{V}_{XYZ}\times X.$


We have often made informal statements of the kind `a vector in 
$\Re^n$ has a certain property with probability $1.$' 

Here we assume that we work within a ball or cube in $\Re^n$ and the
  probability  of a measurable
 subset of that set is proportional to its lebesgue measure.


%

\subsubsection{Implicit Inversion Theorem}
The Implicit Inversion Theorem presents, in a sense, the most general solution to  the problem of computing, given $AB=C,$ the matrix $A$
in terms of $B$ and $C.$ This result along with the 
Implicit Duality Theorem (which is a generalization of `$AB=C \ \ \implies
B^TA^T= C^T$') may be regarded as the basic results of what may be called
`implicit linear algebra' \cite{HN2016}. They were originally derived in \cite{HNarayanan1986a},\cite{narayanan1987topological}. A comprehensive account is in 
\cite{HNarayanan1997}. Some additional applications are available in \cite{HNPS2013}
and \cite{HN2016}.

\begin{theorem}
\label{thm:inverse}
{\bf Implicit
Inversion Theorem}
 For the equation $\Vsp \lrar \Vpq =\Vsq $ 
\begin{enumerate}
\item given $\Vsp, \Vsq,  \exists \Vpq, $ satisfying the equation iff $\Vsp\circ S\supseteq \Vsq \circ S$ and $\Vsp\times S\subseteq \Vsq \times S;$
\item  given $\Vsp, \Vpq,\Vsq$ satisfying the equation, we have $\Vsp \lrar \Vsq =\Vpq $ iff
$\Vsp\circ P\supseteq \Vpq \circ P$ and $\Vsp\times P\subseteq \Vpq \times P.$
\item given $\Vsp, \Vsq,$ assuming that the equation  $\Vsp \lrar \Vpq =\Vsq $
is satisfied by some $\Vpq $ it is unique under the condition $\Vsp\circ P\supseteq \Vpq \circ P$ and $\Vsp\times P\subseteq \Vpq \times P.$
\end{enumerate}

\end{theorem}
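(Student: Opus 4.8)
The plan is to derive all three parts from one explicit candidate solution together with a symmetry that interchanges the roles of $S$ and $P$. Throughout I read the restriction $\Vsp\circ S$ as the projection $\{f_S:\exists f_P,\ (f_S,f_P)\in\Vsp\}$ and the contraction $\Vsp\times S$ as the zero-extension set $\{f_S:(f_S,0_P)\in\Vsp\}$, as forced by the defining identities $\Vsp\circ S=\Vsp\lrar\mathcal{F}_P$ and $\Vsp\times S=\Vsp\lrar 0_P$. The candidate for the solution in (1), which is also the object whose equality with $\Vpq$ is tested in (2), is the matched composition $\Vsp\lrar\Vsq$. This lives on $P\uplus Q$ and equals $\{(f_P,f_Q):\exists f_S,\ (f_S,f_P)\in\Vsp,\ (f_S,f_Q)\in\Vsq\}$; being a matched composition of vector spaces it is itself a vector space, hence a legitimate $\Vpq$.

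For the necessity half of (1) I would expand $\Vsq=\Vsp\lrar\Vpq$ directly. Any $f_S\in\Vsq\circ S$ comes with some $f_P$ having $(f_S,f_P)\in\Vsp$, giving $\Vsp\circ S\supseteq\Vsq\circ S$; and if $(f_S,0_P)\in\Vsp$ then, pairing it with $(0_P,0_Q)\in\Vpq$, we obtain $(f_S,0_Q)\in\Vsq$, giving $\Vsp\times S\subseteq\Vsq\times S$. For sufficiency I would set $\Vpq:=\Vsp\lrar\Vsq$ and check $\Vsp\lrar\Vpq=\Vsq$ by the two inclusions after fully unfolding both sides. The inclusion $\supseteq$ is exactly where $\Vsp\circ S\supseteq\Vsq\circ S$ enters: it guarantees that every $S$-part appearing in $\Vsq$ extends over $P$ inside $\Vsp$. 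The inclusion $\subseteq$ is where $\Vsp\times S\subseteq\Vsq\times S$ enters: two $S$-parts linked to a common $f_P$ in $\Vsp$ differ by an element of $\Vsp\times S$, and closure of $\Vsq$ under adding that difference delivers membership in $\Vsq$. Thus linearity, namely containment of the zero vector and closure under subtraction, is the real engine behind both conditions.

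Part (2) is the same computation with $S$ and $P$ interchanged, exploiting that matched composition is manifestly symmetric from its definition, so $\Vsp\lrar\Vpq=\Vsq$ reads equally as $\Vpq\lrar\Vsp=\Vsq$. Its ``if'' direction is verbatim the sufficiency argument above with $P$ now playing the cancelled role, so $\Vsp\circ P\supseteq\Vpq\circ P$ and $\Vsp\times P\subseteq\Vpq\times P$ yield $\Vsp\lrar\Vsq=\Vpq$; its ``only if'' direction is the necessity argument applied to the relabelled equation $\Vsp\lrar\Vsq=\Vpq$ (internal set $S$), which forces those same two inclusions. Finally (3) is immediate from (2): if two solutions $\Vpq$ of the original equation both satisfy $\Vsp\circ P\supseteq\Vpq\circ P$ and $\Vsp\times P\subseteq\Vpq\times P$, then by (2) each equals the single fixed space $\Vsp\lrar\Vsq$, so they coincide.

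I do not expect a deep obstruction; the argument is a disciplined unfolding of definitions powered by the vector-space axioms. The genuine care is bookkeeping: tracking which set is being cancelled, keeping the disjoint unions straight when the symbol $S$ occurs in both operands of $\Vsp\lrar\Vsq$, and consistently reading $\circ$ and $\times$ as projection and zero-extension. The one point I would state explicitly is that $\Vsp\lrar\Vsq$ is a vector space, so that it qualifies as an admissible $\Vpq$; everything else reduces to the two inclusion checks and their relabelling.
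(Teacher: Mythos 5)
Your proposal is correct and follows essentially the same route as the paper's proof (Lemma \ref{lem:Kgenminor} in the Appendix): the candidate $\Vsp \lrar \Vsq$, necessity from the definition of matched composition plus $0_{PQ}\in\Vpq$, sufficiency via the two inclusions driven respectively by $\Vsp\circ S\supseteq\Vsq\circ S$ and by closure of $\Vsp$ under subtraction together with $\Vsp\times S\subseteq\Vsq\times S$ and closure of $\Vsq$ under addition, and parts (2)--(3) by interchanging the roles of $S$ and $P$. The only difference is that the paper states the lemma for collections closed under subtraction/addition (so it extends to modules and cones), while you work directly with vector spaces, which suffices for the theorem as stated.
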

It can be shown that the special case where $Q=\emptyset$ is equivalent to the above result. For convenience we state it below.
\begin{theorem}
\label{thm:Vgenminor}
Let $\Vs = \Vsp \lrar \Vp.$  Then, $\Vp = \Vsp \lrar \Vs$ iff
$\Vsp \circ P \supseteq \Vp$ and $\Vsp \times P \subseteq \Vp.$
\end{theorem}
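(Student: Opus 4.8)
The plan is to obtain Theorem~\ref{thm:Vgenminor} as the $Q=\emptyset$ specialization of part~(2) of the Implicit Inversion Theorem (Theorem~\ref{thm:inverse}). First I would put $\Vpq := \Vp$ and $\Vsq := \Vs$ in Theorem~\ref{thm:inverse}: since $Q=\emptyset$, the space $\Vpq$ lives on $P$ alone and $\Vsq$ on $S$ alone, so the governing equation $\Vsp \lrar \Vpq = \Vsq$ reads $\Vsp \lrar \Vp = \Vs$, which is exactly the standing hypothesis $\Vs = \Vsp \lrar \Vp$. Thus the triple satisfies the equation and the premise of part~(2) is in force.

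Next I would simplify the two side conditions. Because $\Vp$ is a space on $P$, restricting or contracting it to $P$ returns it unchanged, i.e. $\Vp \circ P = \Vp$ and $\Vp \times P = \Vp$. Hence the conditions $\Vsp \circ P \supseteq \Vpq \circ P$ and $\Vsp \times P \subseteq \Vpq \times P$ of Theorem~\ref{thm:inverse}(2) collapse to $\Vsp \circ P \supseteq \Vp$ and $\Vsp \times P \subseteq \Vp$, while the conclusion $\Vsp \lrar \Vsq = \Vpq$ becomes $\Vsp \lrar \Vs = \Vp$. Reading off the `iff' of part~(2) under these substitutions gives precisely the claimed equivalence.

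The main obstacle in this route is purely bookkeeping: one must confirm that matched composition and the restriction/contraction operations behave correctly when one index set is empty, so that the passage to $Q=\emptyset$ loses nothing. To make the argument self-contained (and to locate where the real content sits), I would instead give the short direct proof, working with the set-level descriptions $\Vsp \circ P = \{g_P : g\in\Vsp\}$, $\Vsp \times P = \{g_P : g\in\Vsp,\ g_S = 0_S\}$, $\Vsp \lrar \Vp = \{g_S : g\in\Vsp,\ g_P\in\Vp\}$, and $\Vsp \lrar \Vs = \{g_P : g\in\Vsp,\ g_S\in\Vs\}$. The `only if' direction is then automatic and does not even use the hypothesis: since $0_S\in\Vs$ one always has $\Vsp\times P \subseteq \Vsp\lrar\Vs \subseteq \Vsp\circ P$, so if $\Vp$ equals $\Vsp\lrar\Vs$ both conditions hold. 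For the `if' direction, the forward containment $\Vp \subseteq \Vsp\lrar\Vs$ uses $\Vp\subseteq\Vsp\circ P$ to lift each $p\in\Vp$ to some $g\in\Vsp$ with $g_P=p$ and $g_S\in\Vsp\lrar\Vp=\Vs$. The crux, and the step I expect to be the most delicate, is the reverse containment: given $p'\in\Vsp\lrar\Vs$ with witness $g$ and $g_S\in\Vs=\Vsp\lrar\Vp$, pick $h\in\Vsp$ with $h_S=g_S$ and $h_P\in\Vp$; then $g-h\in\Vsp$ has $(g-h)_S=0_S$, so $p'-h_P=(g-h)_P\in\Vsp\times P\subseteq\Vp$, whence $p'=(p'-h_P)+h_P\in\Vp$. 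This is where linearity of $\Vsp$ together with the condition $\Vsp\times P\subseteq\Vp$ does the essential work.
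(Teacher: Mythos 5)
Your proposal is correct. The first route---specializing part (2) of Theorem \ref{thm:inverse} to $Q=\emptyset$ after observing that $\Vpq\circ P=\Vpq\times P=\Vp$ when $\Vpq=\Vp$ lives on $P$ alone---is precisely the paper's own proof, which is left unwritten: the text merely asserts that the $Q=\emptyset$ case is equivalent to Theorem \ref{thm:inverse} and pushes all the substance into Lemma \ref{lem:Kgenminor} in the Appendix. Your direct argument is a genuinely self-contained alternative, and every step checks out: the ``only if'' direction needs only $0_S\in\Vs$ (automatic, since $\Vs=\Vsp\lrar\Vp$ contains the image of $0_{SP}$), and the crux of the ``if'' direction---subtracting a witness $h\in\Vsp$ with $h_S=g_S$ and $h_P\in\Vp$ so that $(g-h)_P\in\Vsp\times P\subseteq\Vp$---is exactly the closure-under-subtraction step the Appendix uses to prove $\KSP\lrar\KPQ\subseteq\KSQ$ in part (2) of Lemma \ref{lem:Kgenminor}, specialized to vector spaces and with the roles of the shared index set interchanged ($P$ in your version, $S$ in the lemma, since part (3) of the lemma is obtained from part (2) by that swap). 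What the paper's generality buys is that the same argument works for collections closed only under addition or subtraction---modules, cones, integer lattices with inequality constraints---as the Remark following the lemma points out; what your direct version buys is a short, readable proof in which the reader never has to track the bookkeeping of an empty index set $Q$ or the weaker closure hypotheses.
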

\section{Characterization of output feedback}
\label{sec:outputfb}
In this section we give a characterization of  operators which can be obtained
from a given operator by output feedback. This characterization is used 
in the main result (Theorem \ref{thm:mainII}) later.
Let
\begin{align}
\label{eqn:state11}
 \mydot{w} &= Aw + Bu 
\end{align}
\begin{align}
\label{eqn:state22}
 y &= Cw .
\end{align}
be a set of equations where $A,B,C$ respectively are $ n\times n,n\times m$ and $p\times n$
real matrices.
Let $\Vwdwuy$ be the 
solution space of 
equations \ref{eqn:state11} and
\ref{eqn:state22} with its typical member being denoted by
$(w,\dot{w},u,y).$
We treat output feedback as a constraint on vectors $(u,y)$ to belong
to a  vector space $\Vuy$ and then, under the condition that rows of $C$
and columns of $B$ be linearly independent, show that the constraint 
can be taken to be $u=Ky.$ 
(This is in line with the point of view 
of the `behaviourists'
\cite{willems1991paradigms}
.) 
The natural way to proceed when we 
treat feedback in this general manner is to use the Implicit Inversion 
Theorem.
\begin{theorem}
\label{thm:output}
Let $A,B,C$ be matrices as in equations \ref{eqn:state11} and \ref{eqn:state22}.
Further let the columns of $B$ and the rows of $C$ be linearly 
independent.
Let $\hat{A}$ be a real $n\times n$ matrix and let $\Vwdw ^2$ be 
the solution space of 
 the equation
$\dot{w}=\hat{A}w.$ 
Then
\begin{enumerate}
\item $\hat{A}= A+BKC$ for some $m\times p$ matrix $K$ iff
\begin{enumerate}
\item for every $w\in \Re^n,$ $\hat{A}w=Aw+Bu$ for some $u\in \Re^m,$
and 
\item $\hat{A}w=Aw$ whenever $Cw=0.$
\end{enumerate}
Equivalently
\item $\Vwdw ^2=\Vwdwuy \leftrightarrow \Vuy,$ where $\Vuy$ is the solution space
of the equation $u=Ky$ for some $K$ iff
\begin{enumerate}
\item $\Vwdwuy\circ \wdw \supseteq \Vwdw ^2$ and
\item $\Vwdwuy\times \wdw \subseteq \Vwdw ^2 .$
\end{enumerate}
\end{enumerate}
\end{theorem}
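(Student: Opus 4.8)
The plan is to prove item 1 by direct linear algebra, and then to obtain item 2 by translating the compositions, restrictions, and contractions into matrix statements; the equivalence asserted in item 2 is, at bottom, the $Q=\emptyset$ case of the Implicit Inversion Theorem (Theorem \ref{thm:inverse}), with the rank hypotheses on $B$ and $C$ ensuring that the recovered feedback space actually has the form $u=Ky$.

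First I would record the three elementary computations that make items 1 and 2 interchangeable. Writing $\Vwdw^2=\{(w,\dot{w}):\dot{w}=\hat{A}w\}$, an unwinding of the definitions gives
\begin{align*}
\Vwdwuy\circ\wdw &=\{(w,\dot{w}):\dot{w}-Aw\in\col B\},\\
\Vwdwuy\times\wdw &=\{(w,\dot{w}):\dot{w}=Aw,\ Cw=0\},
\end{align*}
since the restriction projects out $(u,y)$ freely while $y=Cw$ is always feasible, and the contraction forces $u=0,\ y=0$. Hence condition 2(a) says exactly that $\hat{A}w-Aw\in\col B$ for all $w$, i.e.\ condition 1(a), and condition 2(b) says exactly that $Cw=0$ forces $\hat{A}w=Aw$, i.e.\ condition 1(b). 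Likewise, for a fixed $K$ and $\Vuy=\{(u,y):u=Ky\}$, substituting $y=Cw$ and $u=Ky$ yields $\Vwdwuy\lrar\Vuy=\{(w,\dot{w}):\dot{w}=(A+BKC)w\}$, which equals $\Vwdw^2$ iff $\hat{A}=A+BKC$. Thus the two left-hand sides and the two right-hand sides of items 1 and 2 coincide, and it suffices to prove item 1.

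For item 1 the forward direction is immediate: if $\hat{A}=A+BKC$ then $u=KCw$ realises (a), and $Cw=0$ gives $\hat{A}w=Aw$, which is (b). For the converse, set $D:=\hat{A}-A$. Condition (a) says $\im D\subseteq\col B$; since the columns of $B$ are independent, $B$ is injective, so $D=BM$ for a unique $m\times n$ matrix $M$, and moreover $\ker D=\ker M$. Condition (b) says $\ker C\subseteq\ker D=\ker M$, so $M$ is constant on the cosets of $\ker C$; because the rows of $C$ are independent, $C$ maps onto $\Re^p$, and setting $K(Cw):=Mw$ is well posed and linear, giving $M=KC$. Therefore $D=BKC$ and $\hat{A}=A+BKC$, as required.

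Finally I would point out the conceptual reading via the Implicit Inversion Theorem: taking $S=\wdw$, $P=UY$ and $Q=\emptyset$, the existence of a subspace $\Vuy$ solving $\Vwdwuy\lrar\Vuy=\Vwdw^2$ is governed by part 1 of Theorem \ref{thm:inverse}, whose criterion $\Vwdwuy\circ\wdw\supseteq\Vwdw^2$ and $\Vwdwuy\times\wdw\subseteq\Vwdw^2$ is exactly 2(a)--(b) (using that $\Vwdw^2$ lives on $\wdw$, so restriction and contraction to $\wdw$ leave it unchanged). The one place where genuine work beyond the abstract theorem is needed --- and the step I expect to be the main obstacle --- is the factorisation $D=BKC$: the Implicit Inversion Theorem by itself only returns \emph{some} subspace $\Vuy$, and it is the injectivity of $B$ (to pass from $\ker C\subseteq\ker D$ to $\ker C\subseteq\ker M$) together with the surjectivity of $C$ (to define $K$ on all of $\Re^p$) that forces this subspace to be realisable as a static feedback $u=Ky$. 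I would therefore take care to verify that these rank hypotheses are invoked precisely at the factorisation, and to note that dropping either one leaves the abstract existence intact while destroying the feedback form.
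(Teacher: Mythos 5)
Your proposal is correct, and the equivalence bookkeeping (items 1 and 2 are interchangeable; the forward direction of item 1 is immediate) matches the paper's Lemma \ref{lem:o}, Lemma \ref{lem:A+BKC}, and the necessity half of its proof. Where you genuinely diverge is in the sufficiency step. The paper stays inside the implicit-linear-algebra framework: it applies Theorem \ref{thm:inverse} to produce the candidate feedback space $\hat{\Vuy}\equiv \Vwdwuy \lrar \Vwdw^2$, writes a general staircase representation of $\hat{\Vuy}$, and then uses Lemma \ref{lem:1} ($\hat{\Vuy}\circ Y=\F_Y$, from row-independence of $C$) and Lemma \ref{lem:2} ($\hat{\Vuy}\times U=\0_U$, from column-independence of $B$) to force that representation into the form $u=Ky$ with $K=(-R_{21})^{-1}R_{22}$. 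You instead bypass the subspace machinery entirely and factor $D:=\hat{A}-A$ directly: injectivity of $B$ gives $D=BM$ with $\ker D=\ker M$, surjectivity of $C$ together with $\ker C\subseteq\ker M$ gives $M=KC$. Your argument is shorter, self-contained, and makes the role of the two rank hypotheses transparent at the single point where they are needed; the paper's argument is longer but exercises the Implicit Inversion Theorem, which is the conceptual engine the paper is advertising and which accommodates the behavioural viewpoint where the feedback constraint is an arbitrary subspace $\Vuy$ rather than a graph $u=Ky$. Your closing remark — that the abstract theorem only delivers \emph{some} $\Vuy$ and that the rank hypotheses are what force it to be a static feedback — is exactly the division of labour between Theorem \ref{thm:inverse} and Lemmas \ref{lem:1}--\ref{lem:2} in the paper, so the two proofs agree on where the real content lies even though they discharge it by different means.
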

The following lemmas are needed in the proof of Theorem \ref{thm:output}.
\begin{lemma}
\label{lem:A+BKC}
Let $\Vuy$ be the solution space of the equation $u=Ky.$
Then $\Vwdw^2= \Vwdwuy\leftrightarrow \Vuy$ iff
$\hat{A}=A+BKC.$
\begin{proof}
The matched composition $\Vwdwuy\leftrightarrow \Vuy$ is obtained by setting $u=Ky$
in the solution space of equations \ref{eqn:state11}
and \ref{eqn:state22}.
This results in the solution space of 
$\mydot{w}=Aw+BKCw,$
i.e., of the equation $\mydot{w}=(A+BKC)w.$
\end{proof}

\end{lemma}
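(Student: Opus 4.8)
The plan is to compute the matched composition $\Vwdwuy \lrar \Vuy$ explicitly from its definition, recognize both sides of the claimed identity as graphs of linear operators on $\Re^n$, and then reduce the equivalence to the elementary fact that two matrices coincide iff their graphs do.

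First I would identify the index set on which the composition lives. Since $\Vwdwuy$ is a space on $W\mydot{W}UY$ and $\Vuy$ a space on $UY$, the matched composition $\Vwdwuy \lrar \Vuy$ is a space on $(W\mydot{W}UY \setminus UY) \uplus (UY \setminus W\mydot{W}UY) = \wdw$; that is, on the $(w,\mydot{w})$ coordinates, with the $(u,y)$ coordinates matched and eliminated. Unwinding the definition of matched composition, a pair $(w,\mydot{w})$ lies in $\Vwdwuy \lrar \Vuy$ precisely when there exist $u,y$ with $(w,\mydot{w},u,y)\in \Vwdwuy$ (i.e.\ $\mydot{w}=Aw+Bu$ and $y=Cw$) and with $(u,y)\in\Vuy$ (i.e.\ $u=Ky$).

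Next I would eliminate the matched variables. Substituting $y=Cw$ into $u=Ky$ gives $u=KCw$, and feeding this into $\mydot{w}=Aw+Bu$ gives $\mydot{w}=(A+BKC)w$; conversely, for any $w$ and $\mydot{w}=(A+BKC)w$ the choices $y=Cw$ and $u=KCw$ exhibit the required witness in $\Vwdwuy$ and $\Vuy$. Hence $\Vwdwuy \lrar \Vuy = \{(w,\mydot{w}) : \mydot{w}=(A+BKC)w\}$, the graph of the operator $A+BKC$. By hypothesis $\Vwdw^2 = \{(w,\mydot{w}) : \mydot{w}=\hat{A}w\}$ is the graph of $\hat{A}$.

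It then remains to argue that these two graphs coincide iff $\hat{A}=A+BKC$. The `if' direction is immediate. For `only if', note that a solution space of this form assigns to each $w\in\Re^n$ a unique $\mydot{w}$; so if the graphs agree, then $\hat{A}w=(A+BKC)w$ for every $w\in\Re^n$, whence $\hat{A}=A+BKC$ as matrices. I do not expect a genuine obstacle here: the entire content is the careful application of the definition of matched composition followed by this graph-uniqueness observation. The only point demanding attention is the bookkeeping that the composition is taken over the $\wdw$ coordinates with $U,Y$ fully matched, so that nothing survives from $UY\setminus W\mydot{W}UY$, which is empty.
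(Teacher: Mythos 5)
Your proposal is correct and follows essentially the same route as the paper: the paper's proof likewise computes $\Vwdwuy\leftrightarrow\Vuy$ by setting $u=Ky$ (hence $u=KCw$) in the solution space of equations \ref{eqn:state11} and \ref{eqn:state22}, obtaining the solution space of $\mydot{w}=(A+BKC)w.$ Your only additions are bookkeeping the paper leaves implicit — verifying the composition lives on the $W\mydot{W}$ coordinates and spelling out that two graphs of linear operators coincide iff the operators are equal — which tightens the ``iff'' but does not change the argument.
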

\begin{lemma}
\label{lem:o}
\begin{enumerate}
In the statement of Theorem \ref{thm:output}
\item Conditions  $1(a),2(a)$ 
are equivalent.
\item Conditions $1(b),2(b)$ 
are equivalent.
\end{enumerate}
\end{lemma}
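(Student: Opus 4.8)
The plan is to prove each of the two equivalences by directly unwinding the definitions of the restriction ($\circ$) and contraction ($\times$) operations applied to $\Vwdwuy$, and then reading off coordinate-wise what the set inclusions in $2(a)$ and $2(b)$ assert. The whole lemma reduces to translating the abstract implicit-linear-algebra operations into ordinary statements about the matrices $A,B,C,\hat{A}$.

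First I would compute the restriction $\Vwdwuy\circ\wdw$. A pair $(w,\mydot{w})$ lies in it exactly when there exist $u\in\Re^m$ and $y\in\Re^p$ with $(w,\mydot{w},u,y)\in\Vwdwuy$, i.e. with $\mydot{w}=Aw+Bu$ and $y=Cw$. Here $y$ is forced to equal $Cw$ and is then projected away, so the condition collapses to $\mydot{w}-Aw\in\col(B)$; hence $\Vwdwuy\circ\wdw=\{(w,\mydot{w}):\mydot{w}=Aw+Bu\text{ for some }u\in\Re^m\}$. Since every member of $\Vwdw ^2$ has the form $(w,\hat{A}w)$, the inclusion $\Vwdwuy\circ\wdw\supseteq\Vwdw ^2$ of $2(a)$ says precisely that for every $w\in\Re^n$ one has $\hat{A}w=Aw+Bu$ for some $u$, which is exactly condition $1(a)$. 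This settles the first equivalence.

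For the second equivalence I would compute the contraction $\Vwdwuy\times\wdw$, which by the definition $\mathcal{V}_{XY}\times X=\mathcal{V}_{XY}\leftrightarrow\0_Y$ is obtained by pinning the $U$ and $Y$ coordinates to zero. Thus $(w,\mydot{w})\in\Vwdwuy\times\wdw$ exactly when $(w,\mydot{w},0,0)\in\Vwdwuy$, i.e. when $\mydot{w}=Aw$ and $Cw=0$; so $\Vwdwuy\times\wdw=\{(w,Aw):Cw=0\}$. The inclusion $\Vwdwuy\times\wdw\subseteq\Vwdw ^2$ of $2(b)$ then says that for every $w$ with $Cw=0$ we have $Aw=\hat{A}w$, which is exactly condition $1(b)$, completing the second equivalence.

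I do not expect a genuine obstacle: the argument is a careful bookkeeping of which coordinates are projected out versus set to zero. The only subtlety worth flagging is that in the restriction the $y$-coordinate is freely determined by $w$ and therefore drops out, whereas in the contraction \emph{both} $u$ and $y$ are set to zero, so that the row constraint $Cw=0$ reappears. Note also that the hypothesis that the columns of $B$ and the rows of $C$ are linearly independent is not used in this lemma; it is needed only elsewhere to guarantee that the general feedback constraint on $(u,y)$ can be expressed in the form $u=Ky$.
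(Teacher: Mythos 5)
Your proof is correct and follows essentially the same route as the paper: both identify $\Vwdwuy\circ\wdw$ as the solution space of $\mydot{w}=Aw+Bu$ and $\Vwdwuy\times\wdw$ as the set of $(w,Aw)$ with $Cw=0$, and then read the inclusions of $2(a)$, $2(b)$ off coordinate-wise. Your closing observation that the linear-independence hypotheses on $B$ and $C$ are not needed here (only in Lemmas \ref{lem:1} and \ref{lem:2}) is also accurate.
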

\begin{proof}
\begin{enumerate}
\item 
Observe that  $\Vwdwuy\circ W\dw U$ is the solution space of $\mydot{w} =  Aw + Bu .$ 

Let condition 1(a) be satisfied and let $(w,\dsw)$ be s.t 
$\mydot{w} = \hat{A}w= Aw + Bu $ for some $u.$ This means $(w,\dsw)$ belongs to  $\Vwdw^2$
only if it belongs to $(\Vwdwuy\circ W\dw U)\circ W\dw = \Vwdwuy\circ W\dw,$ i.e., only if $\Vwdwuy\circ W\dw \supseteq  \Vwdw^2 .$
Thus condition $1(a)$ implies condition $2(a).$

Next, suppose   $\Vwdwuy\circ W\dw \supseteq  \Vwdw^2 ,$
i.e., $(\Vwdwuy\circ W\dw U)\circ W\dw \supseteq  \Vwdw^2 .$
This means whenever $(w,\dsw)$ belongs to  $\Vwdw^2,$ 
i.e., satisfies $\dsw=\hat{A}w,$
there must exist some $u$ such that $\mydot{w} = Aw + Bu .$
Thus condition $2(a)$ implies condition $1(a).$
\item Let condition 1(b) be satisfied. The space $\Vwdwuy\times \wdw$ is the collection of all $(w,\dsw)$ obtained 
from equations \ref{eqn:state11}, \ref{eqn:state22}  by setting 
$u,y$ to zero, i.e., by setting $u=0$ in equation \ref{eqn:state11} and $y=0$ 
in equation \ref{eqn:state22}. 

Now  setting $u=0$ in equation \ref{eqn:state11} is the same as writing
$\dot{w}={A}w$ and setting $y=0$ 
in equation \ref{eqn:state22} is the same as setting $Cw=0.$ 

So if $\hat{A}w=Aw$ whenever $Cw=0,$ 
 we must have that every member of 
$\Vwdwuy\times \wdw$ also satisfies $\dsw=\hat{A}w,$
i.e., belongs to $\Vwdw^2.$ Thus condition $1(b)$ implies condition 
$2(b).$

Next suppose $\Vwdwuy\times \wdw \subseteq \Vwdw^2.$ So when $u,y$ are set to zero in equations \ref{eqn:state11}, \ref{eqn:state22}, we must have
$(w,\dsw)$ belonging to $\Vwdw^2,$ i.e., satisfying $\dsw=\hat{A}w.$  When $u$ is set to zero
we have $\dsw=Aw$ and when $y$ is set to zero we have $Cw=0.$
It follows that when $Cw=0,$ we must have $Aw=\hat{A}w.$
 Thus condition $2(b)$ implies condition
$1(b).$

\end{enumerate}
\end{proof}

Let $\V_{ABCD}$ be a vector space over a field $\mathbb{F}$ with typical vectors denoted $(a,b,c,d),$ corresponding to index sets $A,B,C,D.$
We say that the variable $a$ is {\bf free} in  $\V_{ABCD}$  iff
$\V_{ABCD}\circ A=\F_A,$ i.e., iff the variable $a$ can acquire every possible
vector value on $A.$

The routine proofs of the following lemmas  have been relegated to the Appendix.
\begin{lemma}
\label{lem:1}
Let $\Vwdwuy$ be the solution space of the equations \ref{eqn:state11} and
\ref{eqn:state22} and let $\Vwdw^2$ be the 
solution space of 
the equation
$\dot{w}=\hat{A}w.$
Let $\Vwdwuy\circ \wdw \supseteq \Vwdw^2$ and let the rows of $C$ 
be linearly independent.
Then 
\begin{enumerate}
\item The variable  $y$ is free in 
$(\Vwdwuy \cap  \Vwdw^2)\circ UY,$
\item $(\Vwdwuy\leftrightarrow \Vwdw^2)\circ Y=\F_Y.$ 
\end{enumerate}
\end{lemma}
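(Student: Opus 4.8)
The plan is to unwind the implicit-algebra operations into concrete linear-algebraic statements, prove the freeness claim (1) directly, and then obtain (2) as an immediate corollary via the generalized-minor identity.

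First I would make the two spaces explicit. The space $\Vwdwuy$ consists of all $(w,\dot{w},u,y)$ with $\dot{w}=Aw+Bu$ and $y=Cw$, while $\Vwdw^2$ consists of all $(w,\dot{w})$ with $\dot{w}=\hat{A}w$. Since the index set $\wdw$ of $\Vwdw^2$ is contained in $W\dw UY$, the intersection computed from the definition, $\Vwdwuy\cap(\F_{UY}\oplus\Vwdw^2)$, is precisely the set of $(w,\dot{w},u,y)$ satisfying simultaneously $\dot{w}=Aw+Bu$, $y=Cw$, and $\dot{w}=\hat{A}w$; equivalently $Bu=\hat{A}w-Aw$, $y=Cw$, $\dot{w}=\hat{A}w$. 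I would also restate the hypothesis $\Vwdwuy\circ\wdw\supseteq\Vwdw^2$ concretely: because $y=Cw$ imposes no constraint once $Y$ is projected away, $\Vwdwuy\circ\wdw=\{(w,\dot{w}):\dot{w}-Aw\in\col(B)\}$, so the hypothesis is exactly the statement that $\hat{A}w-Aw\in\col(B)$ for every $w\in\Re^n$ (this is condition $1(a)$ of Theorem~\ref{thm:output}).

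For part (1), by the definition of \emph{free} I must show $(\Vwdwuy\cap\Vwdw^2)\circ UY\circ Y=\F_Y$, i.e. that every $y\in\Re^p$ is attainable. Given an arbitrary $y$, linear independence of the rows of $C$ makes $C$ surjective onto $\Re^p$, so I can choose $w$ with $Cw=y$. The hypothesis in its concrete form then supplies $u$ with $Bu=\hat{A}w-Aw$, and setting $\dot{w}=\hat{A}w$ produces $(w,\dot{w},u,y)\in\Vwdwuy\cap\Vwdw^2$ whose $Y$-component is the prescribed $y$. As $y$ was arbitrary, the $Y$-restriction exhausts $\F_Y$, which is the freeness claim.

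For part (2), I would invoke the preliminary identity that matched composition with one index set contained in the other is a generalized minor: since $\wdw\subseteq W\dw UY$, we have $\Vwdwuy\leftrightarrow\Vwdw^2=(\Vwdwuy\cap\Vwdw^2)\circ UY$. Restricting further to $Y$ and using $(\Vwdwuy\cap\Vwdw^2)\circ UY\circ Y=(\Vwdwuy\cap\Vwdw^2)\circ Y$ gives $(\Vwdwuy\leftrightarrow\Vwdw^2)\circ Y=(\Vwdwuy\cap\Vwdw^2)\circ UY\circ Y$, which equals $\F_Y$ by part (1). I expect the only real care needed, and the main potential pitfall, to be the bookkeeping of overlapping index sets in forming the intersection and in the reduction $\circ UY\circ Y=\circ Y$; the genuine mathematical content reduces to the surjectivity of $C$ together with the hypothesis, both of which are immediate once the definitions are unwound.
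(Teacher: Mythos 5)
Your proposal is correct and follows essentially the same route as the paper: part (1) picks $w$ with $Cw=y$ (using surjectivity of $C$ from the row-independence hypothesis), uses the containment $\Vwdwuy\circ \wdw \supseteq \Vwdw^2$ to produce a suitable $u$, and sets $\dot{w}=\hat{A}w$; part (2) is the same reduction via the generalized-minor identity $\Vwdwuy\leftrightarrow\Vwdw^2=(\Vwdwuy\cap\Vwdw^2)\circ UY$. Your concrete restatement of the hypothesis as $\hat{A}w-Aw\in\col(B)$ for all $w$ is a clean, equivalent phrasing of the paper's freeness argument.
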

The second part of the following lemma is the dual of the second part of Lemma \ref{lem:1}.
\begin{lemma}
\label{lem:2}
Let $\Vwdwuy$ be the solution space of the equations \ref{eqn:state11} and
\ref{eqn:state22} and let $\Vwdw^2$ be the
solution space of
the equation
$\dot{w}=\hat{A}w.$
Let $\Vwdwuy\times \wdw \subseteq \Vwdw^2$ and let the columns of $B$
be linearly independent.
Then
\begin{enumerate}
\item
In the solution space of the
Equations \ref{eqn:state11}, \ref{eqn:state22}
we must have $u=0$ whenever $y=0.$  

Equivalently
$\Vwdwuy\times W\dw U= \Vwdwuy\times W\dw\oplus \0_U.$


\item
$(\Vwdwuy\leftrightarrow \Vwdw^2)\times U=\0_U.$
\end{enumerate}
\end{lemma}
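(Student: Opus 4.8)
The plan is to read both assertions off a single short computation, after first converting the inclusion hypothesis into a usable pointwise form. I would begin by invoking Lemma~\ref{lem:o}: the standing assumption $\Vwdwuy\times\wdw\subseteq\Vwdw^2$ is condition $2(b)$ of Theorem~\ref{thm:output}, and that lemma identifies it with condition $1(b)$, namely that $\hat{A}w=Aw$ whenever $Cw=0$. This implication is the only place the inclusion enters; the hypothesis that the columns of $B$ are linearly independent is spent only at the very last step. Of the two parts, part~2 is the statement actually needed to apply the Implicit Inversion Theorem (Theorem~\ref{thm:inverse}) and, as remarked, is the dual of Lemma~\ref{lem:1}(2); I would prove it directly and then recover part~1 as the concrete, trajectory-level reading of the same calculation.

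For part~2 I would unfold the matched composition. By definition $u\in(\Vwdwuy\leftrightarrow\Vwdw^2)\times U$ iff $(u,0_Y)\in\Vwdwuy\leftrightarrow\Vwdw^2$, i.e. iff there exist $w,\dot{w}$ with $(w,\dot{w},u,0)\in\Vwdwuy$ and $(w,\dot{w})\in\Vwdw^2$. The first membership unpacks to $\dot{w}=Aw+Bu$ together with $Cw=0$ (the $y=0$ constraint), and the second to $\dot{w}=\hat{A}w$. Since $Cw=0$, condition $1(b)$ gives $\hat{A}w=Aw$, whence $Aw+Bu=\hat{A}w=Aw$ and therefore $Bu=0$. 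As the columns of $B$ are independent this forces $u=0$, so $(\Vwdwuy\leftrightarrow\Vwdw^2)\times U=\0_U$, which is part~2. One could instead quote the Implicit-Duality dual of Lemma~\ref{lem:1}(2), but the direct argument above makes the use of both hypotheses transparent.

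Part~1 records the same fact on trajectories. Read on the plant solutions that also realise the target dynamics $\dot{w}=\hat{A}w$ (that is, on the members of $\Vwdwuy$ lying in $\Vwdw^2$, equivalently $\Vwdwuy\cap\Vwdw^2$), setting $y=0$ gives $Cw=0$, and the computation above yields $Bu=0$, hence $u=0$: no nonzero input is compatible with $y=0$. This is precisely the verbal claim that $u=0$ whenever $y=0$, and it is what the contraction identity $\Vwdwuy\times W\dw U=\Vwdwuy\times W\dw\oplus\0_U$ records on this constrained space, the inclusion $\supseteq$ being immediate (append $u=0$ to a member of the contraction $\times W\dw$) and $\subseteq$ being the vanishing of $u$ just shown. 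I expect no serious obstacle; the only delicate point is the bookkeeping that turns $\leftrightarrow$ and $\times$ into explicit relations among $(w,\dot{w},u,y)$, after which the conclusion is forced, and the indispensable ingredient is the independence of the columns of $B$, without which $Bu=0$ would not deliver $u=0$.
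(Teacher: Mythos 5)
Your treatment of part~2 is correct and is in fact a sounder route than the paper's own. You unfold $(\Vwdwuy\leftrightarrow\Vwdw^2)\times U$ pointwise into the existence of $(w,\dot w)$ with $\dot w=Aw+Bu$, $Cw=0$ and $\dot w=\hat{A}w$, convert the contraction hypothesis into ``$\hat{A}w=Aw$ whenever $Cw=0$'' via Lemma~\ref{lem:o}, and conclude $Bu=0$, hence $u=0$ from the independence of the columns of $B$. The paper instead derives part~2 from part~1 through a chain of identities for intersection and contraction, ending at $(\Vwdwuy\cap \Vwdw^2)\times W\dw U=\Vwdwuy\times W\dw\oplus \0_U$; your computation reaches exactly that endpoint without the intermediate machinery and makes transparent where each hypothesis is spent.

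The place to be careful is part~1. Your argument there is explicitly carried out ``on the members of $\Vwdwuy$ lying in $\Vwdw^2$,'' i.e.\ on $\Vwdwuy\cap\Vwdw^2$, but the identity printed in the lemma, $\Vwdwuy\times W\dw U=\Vwdwuy\times W\dw\oplus \0_U$, concerns $\Vwdwuy$ alone, and your computation does not establish it. Nor can any computation: for any $u\neq 0$ the vector $(w,\dot w,u,y)=(0,Bu,u,0)$ satisfies $\dot w=Aw+Bu$ and $y=Cw$, so it lies in $\Vwdwuy$ with $y=0$ and $u\neq 0$, and $(0,Bu,u)$ lies in $\Vwdwuy\times W\dw U$ but not in $\Vwdwuy\times W\dw\oplus\0_U$. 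What eliminates such vectors is the extra constraint $\dot w=\hat{A}w$, which is present only on the intersection with $\Vwdw^2$. So what you have actually proved is the corrected statement $(\Vwdwuy\cap\Vwdw^2)\times W\dw U=\Vwdwuy\times W\dw\oplus\0_U$, which is the only form used downstream (in part~2 and in the proof of Theorem~\ref{thm:output}). The paper's own proof of part~1 makes the same silent move, passing from $Cw=0$ to ``$\dot w=Aw=\hat{A}w$,'' which uses $\dot w=\hat{A}w$ and not merely the plant equations; your reading is therefore the defensible one, but you should state it as a correction to part~1 rather than presenting your calculation as a proof of the displayed identity for $\Vwdwuy$ itself.
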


{\bf{Proof of Theorem \ref{thm:output}}}\\
(Necessity) Suppose $\hat{A}\equiv A+BKC,$ equivalently,
$\Vwdw^2=\Vwdwuy\leftrightarrow \Vuy,$ where $\Vuy$ is the solution space
of $u=Ky$ for some $K.$

We have $\hat{A}w=Aw+BKCw.$ Setting $u=KCw,$ we see that $\hat{A}w=Aw+Bu.$

Next if $Cw=0,$ $\hat{A}w=Aw+BKCw=Aw.$ Thus conditions 1(a) and 1(b)
are satisfied. By Lemma \ref{lem:o}, we must have that conditions 2(a) and 2(b)
are also satisfied.

(Sufficiency)  By Lemma \ref{lem:o},
for proving 
sufficiency, it is adequate to prove that when 2(a),2(b) are satisfied 
$\hat{A}$ can be written as $A+BKC$ for some matrix $K.$

By Theorem \ref{thm:inverse},
we have  that\\
 if $\Vwdwuy\circ \wdw \supseteq \Vwdw ^2$
and $\Vwdwuy\times \wdw \subseteq \Vwdw ^2$
then $\Vwdwuy \leftrightarrow \hat{\Vuy}= \Vwdw ^2,$\\
where $\hat{\Vuy}\equiv \Vwdwuy \leftrightarrow \Vwdw ^2.$

We will show that $\hat{\Vuy}$ is the solution space of $u=Ky,$ for some matrix $K.$ 

By Lemma \ref{lem:A+BKC}, it would follow  that
$\hat{A}= A+BKC.$

A general representation for $\hat{\Vuy},$ which can be obtained by routine
row operations, is as the solution space of equations

\begin{align}
\label{mat:vuy}
\bbmatrix{ R_{11}  & 0\\
R_{21} & R_{22} \\
0 & R_{32}
}
\ppmatrix{
        {u} \\ y
} = \ppmatrix{0\\0\\0},
\end{align}
where columns  of $(R_{11}^T,R_{21}^T)$ and $(R_{22}^T,R_{32}^T)$
are linearly independent.

By Lemmas \ref{lem:1}, \ref{lem:2}, when columns of $B$ and rows of $C$ are linearly independent and conditions 1(a), 1(b) (or 2(a),2(b)) are satisfied, we must have 
 
$(\Vwdwuy\leftrightarrow \Vwdw^2)\circ Y=\hat{\Vuy}\circ Y=\F_Y$
and $(\Vwdwuy\leftrightarrow \Vwdw^2)\times U=\hat{\Vuy}\times U=\0_U.$

But the former implies that $R_{32}$ does not exist and the latter, that
$R_{21}$ is nonsingular.

Thus we may write $u=((-R_{21})^{-1}R_{22})y.$

Taking $K\equiv ((-R_{21})^{-1}R_{22})$ yields the required result.
\qedsymbl
\begin{remark}
Suppose rows of $B$ and columns of $C$ are not linearly independent.
Let $U_1$ correspond to a maximal independent set of columns $B_1$ of $B$
and let $Y_1$  correspond to a maximal independent set of rows $C_1$ of $C.$
Let 
$\Vwdwuy\circ UY\supseteq \Vuy$ and   $\Vwdwuy\times UY\subseteq \Vuy.$ (This condition is satisfied by
$\hat{\Vuy}$ in the proof of Theorem \ref{thm:output}.)
Now if $\Vwdw= \Vwdwuy\leftrightarrow \Vuy,$
we can show 
that (see Problem $7.4$  in \cite{HNarayanan1997}, \cite{HNarayanan2009})$$\Vwdw= (\Vwdwuy\circ W\dw UY_1\times W\dw U_1Y_1)\leftrightarrow (\Vuy\circ  UY_1\times  U_1Y_1).$$
Now if we let $U_1$ correspond to a maximal independent set of columns $B_1$ of $B$
and let $Y_1$  correspond to a maximal independent set of rows $C_1$ of $C,$
this corresponds to working with the reduced system
$$ \mydot{w}\  = Aw + B_1u_1$$ 
$$ y_1 = C_1w .$$
The above result states, in particular, that if 
$\hat{A}=A+BKC,$ then we can find $K_1$ such that
$\hat{A}=A+B_1K_1C_1,$
So we lose no generality in assuming the linear independence of 
columns of $B$ and rows of $C.$
\end{remark}
\section{The rank one output feedback case}
\label{sec:rank_one}
In this section we solve the problem of characterizing annihilating polynomials of matrices of the form $A+BKC,$ 
where $K$ is a rank one matrix 
and $A,B,C$ are $n\times n,n\times m,p\times n$ matrices as in equations \ref{eqn:state11} and \ref{eqn:state22}.
Our methods are valid when the matrices are defined over arbitrary fields,
but our primary interest remains in real matrices.
\subsection{Preliminary assumptions}
\label{subsec:prelim}
As is well known, state feedback and therefore output feedback, cannot
take a state from within the controllable space to outside the space.
 We therefore lose no generality in assuming that the system we deal with
is fully controllable, i.e., the column space of
$[\ B|AB|\cdots |A^{n-1}B\ ]$ is $\Re ^n.$
Because of full controllability, 
 the annihilating polynomial of the matrix
$A$ in equation \ref{eqn:state11} can also be taken as its
characteristic polynomial.
Also when $B$ has only a single column, say $b,$ we may take
$w^0\equiv b$ and $\{ w^0,Aw^0,\cdots ,A^{n-1}w^0\}$ to be linearly independent.
When the matrix $B$ has more than one column, it is well known that
if we pick a vector $w^0$ in $col(B)$ at random, with probability $1,$
the above linear independence would hold.
(For a description of the notions of controllability, observability etc.
and for basic results of the kind mentioned above,
the reader may refer to \cite{Trentelman2001}, \cite{Wonham1978}.)

\subsection{The main result}
\begin{definition}
A sequence $\omega \equiv (w^0, \cdots , w^n)$ of $n\times 1$ vectors is said to be a full Krylov sequence for an $n\times n$ matrix $A,$
iff $w^j=A^jw^0, j= 1, \cdots n$ and $w^0, \cdots , w^{n-1}$
are linearly independent.
The vector $w^0$ is said to be the initial vector of ${\omega}.$
\end{definition}
If an $n\times n$ real matrix $A$ has its annihilating polynomial of degree $n$
then it has a full Krylov sequence. Further, 
it is clear that there is atmost one full Krylov sequence for a matrix
with a given initial vector. We note that  a full Krylov sequence uniquely defines the
matrix $A,$ since we must have $A(w^0| \cdots |w^{n-1})=(w^1| \cdots |w^{n})$
and $(w^0| \cdots |w^{n-1})$ has independent columns and therefore is  
nonsingular.
\begin{definition}
\label{def:sigma}
$\sigma^{(k+1)}$ denotes the Toeplitz upper triangular matrix 
\begin{align}
\bbmatrix{ \sigma_{0}  & \sigma_{1}& \cdots & \cdots &\sigma_{k}\\
0&\sigma_{0} &  \sigma_{1}  & \cdots &\sigma_{k-1}\\
\vdots & \vdots &\vdots &\vdots& \vdots  \\
0&  0 & \cdots & \sigma_{0} & \sigma_{1}\\
0&  0 & \cdots & 0& \sigma_{0}
}
\end{align}
We will take $\sigma_{0}\equiv 1.$

\end{definition}
\begin{theorem}
\label{thm:mainII}
Let
\begin{align}
\label{eqn:state1}
 \mydot{w} &= Aw + Bu \\
\label{eqn:state2}
 y &= Cw .
\end{align}
be a set of equations where $A,B,C$ respectively are $ n\times n,n\times m$ and $p\times n$
real matrices with the columns of $B$ and the rows of $C$ being linearly
independent.\\
Let $(w^0, \cdots ,w^n)$ be a full Krylov sequence for $A,$ with $w^0\in col(B)$
and let $(v^0| \cdots |v^n)\sigma^{(n+1)}= (w^0| \cdots |w^n).$ 
We then have 
\begin{enumerate}
\item $(v^0, \cdots ,v^n)$ is a full Krylov sequence for $\hat{A},$ where
$\hat{A}$ is such that for $w\in \Re^n,$ 
$\hat{A}w\equiv Aw+w^0u, $ for some $u\in \Re^n.$
\item Let $\hat{A}$ be as in the statement of the previous part. Then $\hat{A}w=Aw$ for $Cw=0$ iff\\
$(\sigma_1, \cdots , \sigma _n)\in \ row(C(w^0| \cdots |w^{n-1})).$
\end{enumerate}

\begin{proof}
We first note that by the definition of  $\sigma^{(k+1)}$  and since $(v^0| \cdots |v^n)\sigma^{(n+1)}= (w^0| \cdots |w^n),$
we must have  $(v^0| \cdots |v^{n-1})\sigma^{(n)}= (w^0| \cdots |w^{n-1}),$
with  $v^0=w^0,$ since $\sigma_0=1.$\\

1. Let $w= (w^0| \cdots |w^{n-1})\beta. $ Then, \\
$$\hat{A}w=\hat{A}(w^0| \cdots |w^{n-1})\beta=\hat{A}(v^0| \cdots |v^{n-1})\sigma^{(n)}\beta
 $$ $$=(v^1| \cdots |v^{n})\sigma^{(n)}\beta=(v^0| \cdots |v^{n})\bbmatrix{\sigma_1, \cdots, \sigma_n\\\sigma^{(n)}}\beta - v^0(\sigma_1, \cdots , \sigma_n)\beta.
$$
$=(w^1| \cdots |w^{n})\beta - v^0(\sigma_1, \cdots , \sigma_n)\beta
$$ $$= Aw-w^0(\sigma_1, \cdots , \sigma_n)\beta= Aw + w^0u,$ 
where $$u\equiv -(\sigma_1, \cdots , \sigma_n)\beta= -(\sigma_1, \cdots , \sigma_n)(w^0| \cdots |w^{n-1})^{-1}w.$$ 

2. We have $(w^0| \cdots |w^{n})=(v^0| \cdots |v^{n})\sigma^{(n+1)},$ i.e., \\
\begin{align}
w^0&\ \ \ \ \ \ \ \ \ \ \ \ \ \ \ \ \ \ \ \ \ \ \ \ =&v^0&\ \\
w^1&\ \ \ \ \ \ \ \ \ \ \ \ \ \ \ \ \ \ \ \ \ \ \ \ =& v^1&+\sigma_1v^0\\
\vdots&\ \ \ \ \ \ \ \ \ \ \ \ \ \ \ \ \ \ \ \ \ \ \ \ \vdots&\vdots&\vdots\\
w^n&\ \ \ \ \ \ \ \ \ \ \ \ \ \ \ \ \ \ \ \ \ \ \ \ =&v^n&+\sigma_1v^{n-1}+\cdots + \sigma_nv^{0}.
\end{align}
We note that since $\hat{A}v^k= v^{k+1},\  k=0, \cdots , n-1$
and
$w^i=v^i+\sigma_1v^{i-1}+\cdots + \sigma_iv^{0},$ 
we must have $\hat{A}w^i= v^{i+1}+\sigma_1v^i+\cdots + \sigma_iv^{1}.$
 We then have 
$${A}w^i=w^{i+1}=\hat{A}w^i+  \sigma_{i+1}v^{0}=\hat{A}w^i+  \sigma_{i+1}w^{0} .$$
Thus, 
$$A(w^0| \cdots |w^{n-1})=\hat{A}(w^0| \cdots |w^{n-1})+w^{0}(\sigma_1, \cdots , \sigma _n).$$
Let $w\equiv (w^0| \cdots |w^{n-1})\beta .$ Then
$$Aw= A(w^0| \cdots |w^{n-1})\beta = \hat{A}(w^0| \cdots |w^{n-1})\beta +w^{0}(\sigma_1, \cdots , \sigma _n)\beta.
$$
Therefore, $$Aw=\hat{A}w\ \  \mbox{iff}$$
$w^{0}(\sigma_1, \cdots , \sigma _n)\beta=
w^{0}(\sigma_1, \cdots , \sigma _n)(w^0| \cdots |w^{n-1})^{-1}w=0,
$ i.e., iff (since $w^0\ne 0$),
$$ (\sigma_1, \cdots , \sigma _n)(w^0| \cdots |w^{n-1})^{-1}w=0.$$
$$\mbox{Thus,}\  Aw=\hat{A}w \ \mbox{for}\  Cw=0\  \mbox{iff}$$
$$ (\sigma_1, \cdots , \sigma _n)(w^0| \cdots |w^{n-1})^{-1}w=0 \ \mbox{whenever}\ Cw=0,$$
$$\mbox{i.e., iff }\ (\sigma_1, \cdots , \sigma _n)(w^0| \cdots |w^{n-1})^{-1}\in row[C],$$
$$\mbox{i.e., iff }\ (\sigma_1, \cdots , \sigma _n)\in row[C(w^0| \cdots |w^{n-1})].$$

%

\end{proof}
\end{theorem}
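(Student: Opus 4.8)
The plan is to construct $\hat{A}$ directly from the full Krylov sequence $(v^0,\ldots,v^n)$ and then measure, column by column on the basis $(w^0,\ldots,w^{n-1})$, how it differs from $A$. First I would record the consequences of the defining relation $(v^0|\cdots|v^n)\sigma^{(n+1)}=(w^0|\cdots|w^n)$. Since $\sigma^{(n+1)}$ is upper-triangular Toeplitz with $\sigma_0=1$, its first $n$ columns vanish in the last row, so reading those columns gives $(v^0|\cdots|v^{n-1})\sigma^{(n)}=(w^0|\cdots|w^{n-1})$, and the very first column gives $v^0=w^0$. Because $\sigma^{(n)}$ is invertible and $(w^0|\cdots|w^{n-1})$ has independent columns, $(v^0|\cdots|v^{n-1})$ is nonsingular; I would then \emph{define} $\hat{A}$ to be the unique matrix with $\hat{A}(v^0|\cdots|v^{n-1})=(v^1|\cdots|v^n)$. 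By construction $(v^0,\ldots,v^n)$ is a full Krylov sequence for $\hat{A}$, so for part 1 it only remains to verify that $\hat{A}$ has the form $\hat{A}w=Aw+w^0u$.

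For that I would read the columns of the defining relation as $w^j=\sum_{k=0}^{j}\sigma_{j-k}v^k$. Applying $\hat{A}$ and using $\hat{A}v^k=v^{k+1}$ (available for $k\le n-1$) gives, for $0\le i\le n-1$,
\[
\hat{A}w^i=\sum_{k=0}^{i}\sigma_{i-k}v^{k+1}=v^{i+1}+\sigma_1 v^i+\cdots+\sigma_i v^1=w^{i+1}-\sigma_{i+1}v^0.
\]
Since $Aw^i=w^{i+1}$ and $v^0=w^0$, this says $(A-\hat{A})w^i=\sigma_{i+1}w^0$, i.e.
\[
(A-\hat{A})(w^0|\cdots|w^{n-1})=w^0(\sigma_1,\ldots,\sigma_n).
\]
Writing an arbitrary $w=(w^0|\cdots|w^{n-1})\beta$ then yields $\hat{A}w=Aw+w^0u$ with $u=-(\sigma_1,\ldots,\sigma_n)\beta$, which proves part 1.

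For part 2 I would begin from the identity just obtained: $\hat{A}w=Aw$ iff $w^0(\sigma_1,\ldots,\sigma_n)\beta=0$, equivalently (as $w^0\ne 0$) iff $(\sigma_1,\ldots,\sigma_n)(w^0|\cdots|w^{n-1})^{-1}w=0$. Thus the requirement ``$\hat{A}w=Aw$ whenever $Cw=0$'' says that the row vector $\rho\equiv(\sigma_1,\ldots,\sigma_n)(w^0|\cdots|w^{n-1})^{-1}$ annihilates $\ker C$. By the standard fact $(\ker C)^{\perp}=\mathrm{row}(C)$, this holds iff $\rho\in\mathrm{row}(C)$, i.e. $\rho=\gamma^T C$ for some $\gamma$; multiplying on the right by $(w^0|\cdots|w^{n-1})$ rewrites this as $(\sigma_1,\ldots,\sigma_n)\in\mathrm{row}(C(w^0|\cdots|w^{n-1}))$, as required.

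The step I expect to need the most care is the index bookkeeping in the displayed computation of $\hat{A}w^i$: the clean cancellation turning $\sum_{k=0}^{i}\sigma_{i-k}v^{k+1}$ into $w^{i+1}-\sigma_{i+1}w^0$ hinges on the Toeplitz shift dropping exactly the lowest-order term $\sigma_{i+1}v^0$, and on the fact that $\hat{A}v^k=v^{k+1}$ is available only for $k\le n-1$ (which is precisely why the column identity is asserted for $i\le n-1$). Everything else---the invertibility of the triangular Toeplitz factor and the orthogonal-complement characterisation of the row space---is routine.
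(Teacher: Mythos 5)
Your proposal is correct and follows essentially the same route as the paper: both define $\hat{A}$ via the shifted Krylov relation on $(v^0,\ldots,v^n)$, derive the column identity $(A-\hat{A})(w^0|\cdots|w^{n-1})=w^0(\sigma_1,\ldots,\sigma_n)$ from the Toeplitz structure, and conclude part 2 by identifying the annihilator of $\ker C$ with $\mathrm{row}(C)$. The only (cosmetic) difference is that you perform the column-by-column computation once and reuse it for both parts, whereas the paper carries out the matrix-form calculation for part 1 and the column-wise one separately for part 2.
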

\begin{corollary}
\label{cor:mainII}
Let $A,\hat{A},B,C$ be as in the statement of Theorem \ref{thm:mainII}.
Let $w^0=B\mu.$ Then
\begin{enumerate}
\item 
$\hat{A}=A+BF,$ where $F=-\mu(\sigma_1, \cdots , \sigma _n)(w^0| \cdots |w^{n-1})^{-1};$
\item 
$\hat{A}=A+BF=A+BKC,$ 
if
$(\sigma_1, \cdots , \sigma _n)=\rho^TC(w^0| \cdots |w^{n-1})$
and
$K= -\mu \rho^T.$
\end{enumerate}
\end{corollary}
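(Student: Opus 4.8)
The plan is to read off both statements directly from the proof of Theorem~\ref{thm:mainII}, the only new inputs being the factorization $w^0 = B\mu$ and, for the second part, the row-membership hypothesis on $(\sigma_1, \cdots, \sigma_n)$. No independent argument is needed; the substance is already contained in the theorem, and the corollary merely makes the feedback matrices $F$ and $K$ explicit.

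For part~1, I would begin from the identity established in the proof of part~1 of Theorem~\ref{thm:mainII}: writing an arbitrary $w\in\Re^n$ as $w=(w^0|\cdots|w^{n-1})\beta$, one has $\hat{A}w = Aw + w^0 u$ with the \emph{scalar} $u = -(\sigma_1,\cdots,\sigma_n)(w^0|\cdots|w^{n-1})^{-1}w$. The key observation is that $u$ is a linear functional of $w$, so setting $g^T \equiv -(\sigma_1,\cdots,\sigma_n)(w^0|\cdots|w^{n-1})^{-1}$ gives $u = g^T w$. Substituting $w^0 = B\mu$ then yields $\hat{A}w = Aw + (B\mu)(g^T w) = (A + B\mu g^T)w$ for every $w$, so $\hat{A} = A + BF$ with $F = \mu g^T = -\mu(\sigma_1,\cdots,\sigma_n)(w^0|\cdots|w^{n-1})^{-1}$, as claimed.

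For part~2, I would take the expression for $F$ just obtained and insert the hypothesis $(\sigma_1,\cdots,\sigma_n) = \rho^T C(w^0|\cdots|w^{n-1})$. Since $(w^0|\cdots|w^{n-1})$ has independent columns by the full Krylov hypothesis and is therefore invertible, the trailing factor $(w^0|\cdots|w^{n-1})(w^0|\cdots|w^{n-1})^{-1}$ collapses to the identity, leaving $F = -\mu\rho^T C = KC$ with $K = -\mu\rho^T$. Hence $\hat{A} = A + BF = A + BKC$.

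I do not anticipate a genuine obstacle: the corollary is purely a matter of rewriting. The two points deserving attention are bookkeeping points. First, one must keep clear that the $u$ in Theorem~\ref{thm:mainII} is a scalar linear functional of $w$, so that $w^0 u$ is genuinely a rank-one linear map applied to $w$; this is what makes the factorization $\hat{A} = A + B(\mu g^T)$ legitimate. Second, the hypothesis of part~2 is exactly the explicit form of the condition $(\sigma_1,\cdots,\sigma_n)\in row(C(w^0|\cdots|w^{n-1}))$ appearing in part~2 of the theorem, with $\rho^T$ naming the coefficients witnessing that membership; recognizing this identification is what guarantees that the factor $C$ can be split off on the right, producing the stated rank-one $K$.
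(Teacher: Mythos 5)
Your proposal is correct and follows essentially the same route as the paper: both proofs quote the identity $\hat{A}w = Aw - w^{0}(\sigma_1,\cdots,\sigma_n)(w^0|\cdots|w^{n-1})^{-1}w$ from the proof of Theorem \ref{thm:mainII}, substitute $w^0 = B\mu$ to read off $F$, and then insert $(\sigma_1,\cdots,\sigma_n)=\rho^T C(w^0|\cdots|w^{n-1})$ so that the Krylov matrix cancels against its inverse, giving $F=KC$ with $K=-\mu\rho^T$. Your explicit remark that $u$ is a scalar linear functional of $w$ is a useful clarification (the theorem's statement writes $u\in\Re^n$, which is evidently a typo), but it does not change the argument.
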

\begin{proof}
1. In the proof of the first part of Theorem \ref{thm:mainII} we saw that when
$w\equiv (w^0| \cdots |w^{n-1})\beta ,$ we have
$$\hat{A}w=Aw- w^{0}(\sigma_1, \cdots , \sigma _n)(w^0| \cdots |w^{n-1})^{-1}w.$$ Since $w^{0}=B\mu,$
it follows that  $$\hat{A}w= Aw -B\mu(\sigma_1, \cdots , \sigma _n)(w^0| \cdots |w^{n-1})^{-1}w.$$
Setting  $F\equiv -\mu (\sigma_1, \cdots , \sigma _n)(w^0| \cdots |w^{n-1})^{-1}w,$
we get the required result.

2. In the proof of the previous part we saw that
$$\hat{A}w= Aw -B\mu(\sigma_1, \cdots , \sigma _n)(w^0| \cdots |w^{n-1})^{-1}w.$$ Let $(\sigma_1, \cdots , \sigma _n)=\rho^TC((w^0| \cdots |w^{n-1})$. If $K\equiv -\mu \rho^T$
we have \\$- \mu(\sigma_1, \cdots , \sigma _n)(w^0| \cdots |w^{n-1})^{-1}= -\mu \rho^TC=KC$
and $\hat{A}w=(A+BKC)w.$

\end{proof}
\subsection{Algebra of Toeplitz triangular matrices}
In this section we show that Toeplitz triangular matrices of the kind $\sigma^{(k+1)}$ (as in 
Definition \ref {def:sigma}) for a fixed $k$
have a very simple algebra. This would be useful to verify whether
for a system defined through equations \ref{eqn:state11}, \ref{eqn:state22}, a matrix $\hat{A},$ with a  given annihilating polynomial can be obtained
by rank one output feedback from the  matrix $A.$ 

Let $\boldsymbol{\hat{\Sigma}}(k+1)$ denote the collection of length $(k+1)$ sequences $\boldsymbol{\sigma}\equiv (\sigma_0, \cdots , \sigma _k).$
We define addition and mutiplication for this collection through
$$\boldsymbol{\sigma}+\boldsymbol{\rho}\equiv (\sigma_0, \cdots , \sigma _k)+(\rho_0, \cdots , \rho _k)=(\sigma_0 +\rho_0, \cdots , \sigma_k +\rho _k) \ 
\mbox{and}$$ $$\boldsymbol{\sigma}*\boldsymbol{\rho}\equiv (\sigma_0 \rho_0,\cdots,
\Sigma _{j=0}^{j=i}\sigma_j\rho_{(i-j)}, \cdots , \Sigma _{j=0}^{j=k}\sigma_j\rho_{(k-j)}).$$
If $\boldsymbol{\sigma}(s)\equiv \sigma_0+ \sigma_1s+ \cdots + \sigma _ks^k$
and $\boldsymbol{\rho}(s)\equiv \rho_0+ \rho_1s+ \cdots + \rho _ks^k$
then it is clear that $$(\boldsymbol{\sigma}*\boldsymbol{\rho})(s)=
\boldsymbol{\sigma}(s)\times \boldsymbol{\rho}(s)\  modulo\  s^{k+1}.$$
It follows that $\boldsymbol{\sigma}*\boldsymbol{\rho}=\boldsymbol{\rho}*\boldsymbol{\sigma}$ and that  $`*$' is distributive with respect to $`+$' so that
$\boldsymbol{\hat{\Sigma}}(k+1)$ is a commutative ring under addition $`+$' 
and multiplication $`*$'.

Now let us associate with $\boldsymbol{\sigma}\equiv (\sigma_0, \cdots , \sigma _k)$ the matrix 
$\sigma^{(k+1)}$ which, as in Definition \ref{def:sigma}, denotes the upper triangular matrix
\begin{align}
\bbmatrix{ \sigma_{0}  & \sigma_{1}& \cdots & \cdots &\sigma_{k}\\
0&\sigma_{0} & \cdots & \cdots &\sigma_{k-1}\\
\vdots & \vdots &\vdots &\vdots& \vdots  \\
0&  0 & \cdots & 0& \sigma_{0}
}
\end{align}
If $\boldsymbol{\sigma}*\boldsymbol{\rho}=\boldsymbol{\mu},$
it is easily verified that 
$\sigma^{(k+1)}\times \rho^{(k+1)}=\rho^{(k+1)}\times \sigma^{(k+1)}=\mu^{(k+1)}.$

In the computations of interest to us the above upper triangular matrices
have diagonal elements equal to $1.$ Addition does not play an important part.
So we introduce the notation $\boldsymbol{\Sigma}(k+1)$ for  the collection of length $(k+1)$ sequences $\boldsymbol{\sigma}\equiv (\sigma_0, \cdots , \sigma _k),$
with $\sigma _0=1.$ 
The following theorem summarizes simple, but useful facts about $\boldsymbol{\Sigma}(k+1).$
The routine proof is omitted.
\begin{theorem}
\label{thm:sigma_properties}
\begin{enumerate}
\item The $(k+1)$ length sequence
$(1,0,\cdots, 0)$ acts as the identity for $\boldsymbol{\Sigma}(k+1)$ under the multiplication operation $`*$' . 
\item Every element of  $\boldsymbol{\Sigma}(k+1)$ has an inverse. 
\item $\boldsymbol{\Sigma}(k+1)$ is a commutative group under $`*$'.
\item When $\boldsymbol{\mu}\equiv (1,0,\cdots, 0),$
$\mu^{(k+1)}$ is the identity matrix of size $(k+1)\times (k+1)$ 
\item If $\boldsymbol{\sigma}*\boldsymbol{\rho}=(1,0,\cdots, 0),$
i.e, if $\boldsymbol{\sigma}$ and $\boldsymbol{\rho},$
are inverses of each other
under $`*$', the matrices $\sigma^{(k+1)}$ and $\rho^{(k+1)}$
are inverses of each other.
\item Let $\boldsymbol{\rho},\boldsymbol{\sigma} \in \boldsymbol{\Sigma}(k+1).$ If $\boldsymbol{\rho}*\boldsymbol{\sigma}= \boldsymbol{\mu},$
then, using the fact that
$\rho^{(k+1)}\times \sigma^{(k+1)}=\mu^{(k+1)},$ 
\begin{align}
\label{eqn:first_row}
\ppmatrix{ 
 \rho_{0}  & \rho_{1}& \cdots & \cdots &\rho_{k}}
\bbmatrix{ \sigma_{0}  & \sigma_{1}& \cdots & \cdots &\sigma_{k}\\
0&\sigma_{0} & \cdots & \cdots &\sigma_{k-1}\\
\vdots & \vdots &\vdots &\vdots& \vdots  \\
0&  0 & \cdots & 0& \sigma_{0}
}
=\ppmatrix{ 
 \mu_{0}  & \mu_{1}& \cdots & \cdots &\mu_{k}}.
\end{align}
\begin{align}
\label{eqn:last_column}
\bbmatrix{ \rho_{0}  & \rho_{1}& \cdots & \cdots &\rho_{k}\\
0&\rho_{0} & \cdots & \cdots &\rho_{k-1}\\
\vdots & \vdots &\vdots &\vdots& \vdots  \\
0&  0 & \cdots & 0& \rho_{0} 
}
\ppmatrix{
\sigma_{k}\\   \sigma_{1} \\\vdots \\ \sigma_{0}
} = \ppmatrix{
\mu_{k}\\   \mu_{1} \\\vdots \\ \mu_{0}
}.
\end{align}
\item If  equation \ref{eqn:first_row} or equation \ref{eqn:last_column}
is true, then  $\boldsymbol{\rho}*\boldsymbol{\sigma}= \boldsymbol{\sigma}*\boldsymbol{\rho}= \boldsymbol{\mu}.$
\end{enumerate}
\end{theorem}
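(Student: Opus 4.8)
The plan is to exploit the two dictionaries already set up in the discussion preceding the theorem: the correspondence $\boldsymbol{\sigma}\mapsto\boldsymbol{\sigma}(s)=\sigma_0+\sigma_1 s+\cdots+\sigma_k s^k$, which carries $*$ to multiplication in $\mathbb{F}[s]/(s^{k+1})$, and the correspondence $\boldsymbol{\sigma}\mapsto\sigma^{(k+1)}$, which carries $*$ to matrix multiplication. Once these are invoked, most parts reduce either to standard facts about the ring $\mathbb{F}[s]/(s^{k+1})$ or to reading off entries of a matrix product.

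First I would dispose of the immediate parts. For part 1, the sequence $(1,0,\cdots,0)$ corresponds to the polynomial $1$, the multiplicative identity of $\mathbb{F}[s]/(s^{k+1})$; equivalently a one-line convolution check gives $\boldsymbol{\sigma}*(1,0,\cdots,0)=\boldsymbol{\sigma}$. Part 4 is direct from Definition \ref{def:sigma}: taking $\sigma_0=1$ and $\sigma_1=\cdots=\sigma_k=0$ makes $\sigma^{(k+1)}$ the $(k+1)\times(k+1)$ identity matrix. Part 5 then follows by combining part 4 with the fact, noted just before the theorem, that $\boldsymbol{\sigma}*\boldsymbol{\rho}=\boldsymbol{\mu}$ implies $\sigma^{(k+1)}\times\rho^{(k+1)}=\mu^{(k+1)}$: with $\boldsymbol{\mu}=(1,0,\cdots,0)$ this reads $\sigma^{(k+1)}\times\rho^{(k+1)}=I$, so the two matrices are two-sided inverses.

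The one substantive step is part 2. Here I would construct the inverse $\boldsymbol{\rho}$ of any $\boldsymbol{\sigma}$ with $\sigma_0=1$ directly, by solving the convolution equations $\sum_{j=0}^{i}\sigma_j\rho_{i-j}=\delta_{i0}$ recursively: the $i=0$ equation forces $\rho_0=1$, and for $i\ge1$ it rearranges to $\rho_i=-\sum_{j=1}^{i}\sigma_j\rho_{i-j}$, determining $\rho_i$ uniquely from $\rho_0,\cdots,\rho_{i-1}$. This is exactly the statement that an element of $\mathbb{F}[s]/(s^{k+1})$ with nonzero constant term is a unit, and the constructed inverse automatically has $\rho_0=1$, so it lies in $\boldsymbol{\Sigma}(k+1)$. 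Part 3 is then assembled from the pieces: closure holds since $(\boldsymbol{\sigma}*\boldsymbol{\rho})_0=\sigma_0\rho_0=1$; associativity and commutativity are inherited from the commutative ring structure of $\boldsymbol{\hat{\Sigma}}(k+1)$ already established; the identity is part 1 and inverses are part 2.

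Finally, parts 6 and 7 are the bookkeeping identifying the convolution with matrix products. For part 6, equation \ref{eqn:first_row} is precisely the first row of the matrix identity $\rho^{(k+1)}\times\sigma^{(k+1)}=\mu^{(k+1)}$, and equation \ref{eqn:last_column} is its last column read bottom-to-top, so both hold once $\boldsymbol{\rho}*\boldsymbol{\sigma}=\boldsymbol{\mu}$ is assumed. For the converse, part 7, I would expand each left-hand side using the Toeplitz entry $(\sigma^{(k+1)})_{ij}=\sigma_{j-i}$: the $j$-th component of the row product is $\sum_{i=0}^{j}\rho_i\sigma_{j-i}=(\boldsymbol{\rho}*\boldsymbol{\sigma})_j$, and the column product gives the same convolution coefficients in reversed order, so each of \ref{eqn:first_row} and \ref{eqn:last_column} is just the coordinatewise assertion $\boldsymbol{\rho}*\boldsymbol{\sigma}=\boldsymbol{\mu}$; commutativity of $*$ then yields $\boldsymbol{\sigma}*\boldsymbol{\rho}=\boldsymbol{\mu}$ as well. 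The only real place for error is matching the index ordering in \ref{eqn:last_column} against the reversed last column $(\sigma_k,\sigma_{k-1},\cdots,\sigma_0)^T$ of $\sigma^{(k+1)}$, so I would verify that alignment on a small case such as $k=2$ before committing the general formula.
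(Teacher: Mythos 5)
Your proof is correct. The paper itself omits the proof of this theorem as ``routine,'' and what you have written is precisely the intended argument: it relies on the two correspondences the paper sets up immediately before the statement (sequences under $*$ versus multiplication in $\mathbb{F}[s]$ modulo $s^{k+1}$, and sequences versus Toeplitz upper triangular matrices), reduces part 2 to the standard recursive construction of the inverse of a unit with constant term $1$, and correctly identifies equations \ref{eqn:first_row} and \ref{eqn:last_column} as the first row and (reversed) last column of the matrix identity $\rho^{(k+1)}\sigma^{(k+1)}=\mu^{(k+1)}$, so that parts 6 and 7 are equivalent coordinatewise restatements of $\boldsymbol{\rho}*\boldsymbol{\sigma}=\boldsymbol{\mu}$.
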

We will denote the inverse of $\boldsymbol{\sigma}$ under $`*$'
by $\boldsymbol{\sigma}^{-1}$.

\subsection{Matrices and full Krylov sequences}
Our discussion of annihilating polynomials of matrices obtainable 
by output feedback 
from the matrix $A,$ 
will use their full Krylov sequences. In this section we summarize their
useful properties and derive a result on state feedback which
will help us solve the linear output feedback annihilation polynomial
problem when there is only a single input or a single output.

\begin{lemma}
\label{lem:unique}
Let $(w^0, \cdots ,w^{n})$ be a full Krylov sequence of the $n\times n $ matrix $A$.
Then the equation 
\begin{align}
\label{eqn:basic_eqn_unique}
\ppmatrix{
w^0|& \cdots &|w^{n} 
}\ppmatrix{
d_{n}\\   d_{n-1} \\\vdots \\ d_{0}
} 
=0
\end{align}
has a unique solution under the condition that $d_0=1.$
\end{lemma}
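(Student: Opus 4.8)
The plan is to reduce the matrix equation to the single statement that $w^n$ is expressible in the basis $\{w^0,\cdots,w^{n-1}\}$, and to observe that the normalization $d_0=1$ is exactly what pins down the one remaining degree of freedom. First I would unpack the product. Since the coefficient column has entries $d_n,d_{n-1},\cdots,d_0$ read top to bottom against the columns $w^0,\cdots,w^n$, the equation reads
$$d_n w^0 + d_{n-1} w^1 + \cdots + d_1 w^{n-1} + d_0 w^n = 0,$$
and imposing $d_0=1$ rewrites it as
$$w^n = -\left( d_1 w^{n-1} + d_2 w^{n-2} + \cdots + d_n w^0 \right).$$

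For existence, I would invoke the definition of a full Krylov sequence: the vectors $w^0,\cdots,w^{n-1}$ are linearly independent, hence a basis of $\Re^n$. Consequently $w^n\in\Re^n$ admits a unique expansion in this basis, and reading off its coordinates (with a sign change) produces scalars $d_1,\cdots,d_n$ that satisfy the displayed equation with $d_0=1$. This establishes that at least one solution of the required form exists.

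For uniqueness, I would take two solutions $(d_0,\cdots,d_n)$ and $(d_0',\cdots,d_n')$, both normalized so that $d_0=d_0'=1$, and subtract. The $w^n$ terms cancel because their coefficients are both $1$, leaving
$$\sum_{j=0}^{n-1} (d_{n-j}-d_{n-j'})\, w^j = 0,$$
a linear dependence among $w^0,\cdots,w^{n-1}$. Their linear independence forces every coefficient to vanish, so $d_j=d_j'$ for all $j$, and the two solutions coincide.

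I do not expect a genuine obstacle here: the entire content is that the $n+1$ vectors $w^0,\cdots,w^n$ in $\Re^n$ span a kernel of dimension one (since the first $n$ are independent), and the affine condition $d_0=1$ selects a single point of that line. The only point demanding care is the reversed indexing between the coefficient entries $d_n,\cdots,d_0$ and the columns $w^0,\cdots,w^n$; getting that correspondence right is what makes the $d_0$ normalization attach precisely to the column $w^n$, which is the vector being solved for in the basis expansion.
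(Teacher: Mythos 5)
Your proof is correct and rests on exactly the same facts as the paper's: the linear independence of $w^0,\cdots,w^{n-1}$ and the forced dependence of the $n+1$ columns, which together make the kernel one-dimensional and transverse to the hyperplane $d_0=0$. The paper phrases this as a dimension count on the solution space while you split it into an explicit basis expansion for existence and a subtraction argument for uniqueness, but the content is identical (note only the small typo $d_{n-j'}$ for $d'_{n-j}$ in your subtraction step).
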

\begin{proof}
By definition of full Krylov sequence, the matrix $(w^0| \cdots |w^{n-1})$
has linearly independent columns while the matrix $(w^0| \cdots |w^{n})$
has dependent columns. Therefore 
the only solution of equation \ref{eqn:basic_eqn_unique},
when $d_0=0,$ is the zero solution 
and the solution space of the equation 
has dimension $1,$ being the collection of scalar multiples of a nonzero vector.
The result follows.  
\end{proof}
The next result relates the annihilating polynomials of 
matrices $A, A+BF,$ when $F$ has rank one (see part 1 of Corollary \ref{cor:mainII}).
\begin{theorem}
\label{thm:ann_relate}
Let $(w^0, \cdots ,w^{n}),(v^0, \cdots ,v^{n})$ be full Krylov sequences 
of $n\times n$ matrices $A,\hat{A}$ respectively.
Further let $(v^0| \cdots |v^n)\sigma^{(n+1)}=(w^0| \cdots |w^{n}).$
Let $d_0s^n+d_1s^{n-1}+\cdots +d_n, 
b_0s^n+b_1s^{n-1}+\cdots +b_n,$
be the annihilating polynomials of $A,\hat{A}$ respectively.
Let $\sigma_0=d_0=b_0=1.$\\
Then, 
\begin{enumerate}
\item
\begin{align}
\label{eqn:last_columndash}
\bbmatrix{ \sigma_{0}  & \sigma_{1}& \cdots & \cdots &\sigma_{n}\\
0&\sigma_{0} & \cdots & \cdots &\sigma_{n-1}\\
\vdots & \vdots &\vdots &\vdots& \vdots  \\
0&  0 & \cdots & 0& \sigma_{0} 
}
\ppmatrix{
d_{n}\\   d_{n-1} \\\vdots \\ d_{0}
} = \ppmatrix{
b_{n}\\   b_{n-1} \\\vdots \\ b_{0}
}.
\end{align}
Hence,
$\boldsymbol{\sigma}*\boldsymbol{d}=\boldsymbol{d}*\boldsymbol{\sigma}=\boldsymbol{b},
\mbox{where}\  \boldsymbol{\sigma},\boldsymbol{d},\boldsymbol{b},\ 
\mbox{denote the sequences} \ (\sigma_0, \cdots , \sigma_n),(d_0, \cdots , d_n),(b_0, \cdots , b_n),
\mbox{respectively.}$\\
\item
Conversely, if equation \ref{eqn:last_columndash}
holds,  and 
$(v^0| \cdots |v^n)\hat{\sigma}^{(n+1)}=(w^0| \cdots |w^{n}),$
then $\hat{\sigma}^{(n+1)}= {\sigma}^{(n+1)}.$
\end{enumerate}

\end{theorem}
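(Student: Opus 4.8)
The plan is to reduce both parts to the uniqueness statement of Lemma \ref{lem:unique} together with the group structure of $\boldsymbol{\Sigma}(n+1)$ recorded in Theorem \ref{thm:sigma_properties}. The common starting point is to read off two null-vector identities from the annihilating polynomials: applying $d_0 A^n + d_1 A^{n-1} + \cdots + d_n I = 0$ to the initial vector $w^0$ and using $w^j = A^j w^0$ with $d_0 = 1$ gives $(w^0 | \cdots | w^n)(d_n, d_{n-1}, \cdots, d_0)^T = 0$, and the identical computation for $\hat{A}$ and $v^0$ gives $(v^0 | \cdots | v^n)(b_n, \cdots, b_0)^T = 0$. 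Both coefficient columns are normalized to have bottom entry $1$, which is exactly the hypothesis needed to invoke Lemma \ref{lem:unique}.

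For part 1, I would substitute $(w^0 | \cdots | w^n) = (v^0 | \cdots | v^n)\sigma^{(n+1)}$ into the first identity to obtain $(v^0 | \cdots | v^n)\bigl[\sigma^{(n+1)}(d_n, \cdots, d_0)^T\bigr] = 0$. Writing $\boldsymbol{c} \equiv \sigma^{(n+1)}(d_n, \cdots, d_0)^T$, the bottom row $(0, \cdots, 0, \sigma_0)$ of $\sigma^{(n+1)}$ together with $\sigma_0 = d_0 = 1$ forces the bottom entry of $\boldsymbol{c}$ to be $1$. Thus $\boldsymbol{c}$ solves $(v^0 | \cdots | v^n)x = 0$ with bottom entry $1$; Lemma \ref{lem:unique} says there is only one such solution, and the second identity displays $(b_n, \cdots, b_0)^T$ as one. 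Hence $\boldsymbol{c} = (b_n, \cdots, b_0)^T$, which is equation (\ref{eqn:last_columndash}); part 7 of Theorem \ref{thm:sigma_properties} then converts this matrix equation into $\boldsymbol{\sigma} * \boldsymbol{d} = \boldsymbol{d} * \boldsymbol{\sigma} = \boldsymbol{b}$.

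For the converse I would re-run the part 1 argument verbatim with $\hat{\sigma}$ in place of $\sigma$: the hypothesis $(v^0 | \cdots | v^n)\hat{\sigma}^{(n+1)} = (w^0 | \cdots | w^n)$ and the same appeal to Lemma \ref{lem:unique} yield $\boldsymbol{\hat{\sigma}} * \boldsymbol{d} = \boldsymbol{b}$. Since equation (\ref{eqn:last_columndash}) is assumed, $\boldsymbol{\sigma} * \boldsymbol{d} = \boldsymbol{b}$ as well, so $\boldsymbol{\hat{\sigma}} * \boldsymbol{d} = \boldsymbol{\sigma} * \boldsymbol{d}$. Because $d_0 = 1$, the sequence $\boldsymbol{d}$ lies in $\boldsymbol{\Sigma}(n+1)$, which is a group under $*$ by part 3 of Theorem \ref{thm:sigma_properties}, so cancelling $\boldsymbol{d}$ gives $\boldsymbol{\hat{\sigma}} = \boldsymbol{\sigma}$, i.e.\ $\hat{\sigma}^{(n+1)} = \sigma^{(n+1)}$.

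The only point requiring care is the index bookkeeping in part 1 — one must check that the reversed coefficient orderings in the columns $(d_n, \cdots, d_0)^T$ and $(b_n, \cdots, b_0)^T$ are matched to the Toeplitz convention of Definition \ref{def:sigma} so that the product really has bottom entry $\sigma_0 d_0 = 1$; this normalization is precisely what unlocks Lemma \ref{lem:unique}, so I regard it as the crux rather than a mere formality. Beyond that I anticipate no obstacle, as every remaining step is a direct citation. Should one prefer to avoid the group-cancellation step, the converse also follows directly from $(v^0 | \cdots | v^n)\bigl(\sigma^{(n+1)} - \hat{\sigma}^{(n+1)}\bigr) = 0$ by peeling off columns from the left and using the linear independence of $v^0, \cdots, v^{n-1}$.
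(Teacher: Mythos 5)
Your proposal is correct and follows essentially the same route as the paper: both derive the two null-vector identities from the annihilating polynomials, substitute the Toeplitz relation, invoke Lemma \ref{lem:unique} via the normalized bottom entry $\sigma_0 d_0 = 1$, and then settle the converse by the invertibility of $\boldsymbol{d}$ in $\boldsymbol{\Sigma}(n+1)$ (the paper expresses $\boldsymbol{\sigma}$ and $\boldsymbol{\hat{\sigma}}$ both as the same product with $\boldsymbol{d}^{-1}$, which is the same cancellation you perform). Your closing remark that the converse can alternatively be read off column by column from $(v^0|\cdots|v^n)(\sigma^{(n+1)}-\hat{\sigma}^{(n+1)})=0$ is also valid, but it is a minor variant rather than a different argument.
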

\begin{proof}
1. We have
\begin{align}
\label{eqn:basic_eqn}
0=
\ppmatrix{
v^0|& \cdots &|v^{n} 
}\ppmatrix{
b_{n}\\   b_{1} \\\vdots \\ b_{0}
}= 
\ppmatrix{
w^0|& \cdots &|w^{n} 
}\ppmatrix{
d_{n}\\   d_{1} \\\vdots \\ d_{0}
} 
= [\ppmatrix{
v^0|& \cdots &|v^{n} 
}\ppmatrix{\sigma^{(n+1)}}]\ppmatrix{
d_{n}\\   d_{1} \\\vdots \\ d_{0}
}. 
\end{align}
But by Lemma \ref{lem:unique},
the equation 
\begin{align}
\ppmatrix{
v^0|& \cdots &|v^{n} 
}\ppmatrix{
b_{n}\\   b_{1} \\\vdots \\ b_{0}
}=0,
\end{align}
has a unique solution under the condition that $b_0=1.$
Now $\sigma_0d_0$ is the last entry of the column $\sigma^{(n+1)}
\ppmatrix{
d_{n}\\   d_{1} \\\vdots \\ d_{0}
}.
$
Since $\sigma_0,d_0 =1,$ we must have 
$\sigma_0d_0=1.$  

The result follows.

By Theorem \ref{thm:sigma_properties}, part 7, it follows that $\boldsymbol{\sigma}*\boldsymbol{d}=\boldsymbol{b}.$
\\

2. If equation \ref{eqn:last_columndash} holds, then by Theorem \ref{thm:sigma_properties}, part 7, we must have 
$\boldsymbol{\sigma}*\boldsymbol{d}=\boldsymbol{b},$ and if this equation
holds then by Theorem \ref{thm:sigma_properties}, $\boldsymbol{d}*\boldsymbol{b}^{-1}= \boldsymbol{\sigma}.$

But if 
$(v^0| \cdots |v^n)\hat{\sigma}^{(n+1)}=(w^0| \cdots |w^{n}),$
by the previous part, we must have $\boldsymbol{\hat{\sigma}}*\boldsymbol{d}=\boldsymbol{b},$ 
and therefore $\boldsymbol{d}*\boldsymbol{b}^{-1}= \boldsymbol{\hat{\sigma}}.$

\end{proof}

\subsection{Solution of the rank one output feedback annihilating polynomial problem}
\label{subsec:rank_one}
For the system given in equations \ref{eqn:state11} and \ref{eqn:state22},
let us suppose the matrix $B$ has a single column.
(If $C$ has a single row we could work with the dual system.)
Let $A$ have the annihilating polynomial $d_0s^n+ \cdots +d_n.$
Suppose the desired annihilating polynomial is $b_0s^n+ \cdots +b_n.$
We need to find, if it exists, a matrix $K,$ such that
$A+BKC$ has the desired annihilating polynomial, and if it does not exist,
prove that it does not.

Firstly, by Theorem \ref{thm:ann_relate}, we need to find a sequence
$\boldsymbol{\sigma}\equiv (\sigma_0, \cdots , \sigma _n)$
such that $\boldsymbol{\sigma}*\boldsymbol{d}=\boldsymbol{b}
,$ where $\boldsymbol{d}\equiv (d_0, \cdots , d _n),$
$\boldsymbol{b}\equiv (b_0, \cdots , b _n).$
This ensures rank one state feedback by Corollary \ref{cor:mainII}, part 1.
Next, by part 4 of the same theorem, we must have 
$(\sigma_1, \cdots , \sigma _n)\in \ row(C(w^0| \cdots |w^n)).$
These two conditions are together necessary and sufficient.

We have
$\boldsymbol{\sigma}*\boldsymbol{d}=\boldsymbol{b},$ i.e., 
\begin{align}
\label{eqn:extend_mainfb1}
\ppmatrix{ 
 \sigma_{0}  & \sigma_{1} & \cdots &\sigma_{n}}
\bbmatrix{ d_{0}  & d_{1} & \cdots &d_{n}\\
0&d_{0} & \cdots & d_{n-1}\\
\vdots & \vdots &\vdots & \vdots  \\
0&   \cdots & 0& d_{0}
}=
\ppmatrix{ 
 b_{0}  & b_{1} & \cdots &b_{n}}
.
\end{align}
We note that $\sigma_0,d_0,b_0$ all equal $1.$
Therefore, 
\begin{align}
\label{eqn:extend_mainfb1ia}
\ppmatrix{ 
{0}  & \sigma_{1} & \cdots &\sigma_{n}}
\bbmatrix{ d_{0}  & d_{1}& \cdots & \cdots &d_{n}\\
0&d_{0} & \cdots & \cdots &d_{n-1}\\
\vdots & \vdots &\vdots &\vdots& \vdots  \\
0&  0 & \cdots & 0& d_{0}
}=
\ppmatrix{ 
 b_{0}  & b_{1}&  \cdots &b_{n}}-\ppmatrix{ 
 d_{0}  & d_{1}&  \cdots &d_{n}}
,
\end{align}
i.e., 
\begin{align}
\label{eqn:extend_mainfb1ib}
\ppmatrix{ 
 \sigma_{1}&  \cdots &\sigma_{n}}
\bbmatrix{ d_{0}  & d_{1} & \cdots &d_{n-1}\\
0&d_{0} & \cdots & d_{n-2}\\
\vdots & \vdots &\vdots & \vdots  \\
0&   \cdots & 0& d_{0}
}=
\ppmatrix{ 
  b_{1}&  \cdots &b_{n}}-\ppmatrix{ 
  d_{1}&  \cdots &d_{n}},
\end{align}

%

Now by Corollary \ref{cor:mainII}, part 2, 
we must have 
\begin{align}
\ppmatrix{ 
  \sigma_{1}&  \cdots &\sigma_{n}}
=
\ppmatrix{ 
 \rho_{1}  & \cdots  &\rho_{p}} C
\ppmatrix{
w^0|& \cdots &|w^{n-1}}, 
\end{align}
for some $ (\rho_{1},   \cdots  ,\rho_{p}).$
Thus we need to solve the equation 
\begin{align}
\label{eqn:mainfb}
\ppmatrix{ 
\rho_{1}  & \cdots  &\rho_{p}}C\ppmatrix{
w^0|& \cdots &|w^{n-1}} 
\bbmatrix{ d_{0}  & d_{1} & \cdots &d_{n-1}\\
0&d_{0} & \cdots & d_{n-2}\\
\vdots & \vdots &\vdots & \vdots  \\
0&   \cdots & 0& d_{0}
}=
\ppmatrix{ 
  b_{1}&  \cdots &b_{n}}-\ppmatrix{ 
  d_{1}&  \cdots &d_{n}}.
\end{align}

As a preliminary to solving this equation, we can project 
right side vector of the equation \ref{eqn:mainfb}  onto the 
row space of 
\begin{align}
Q\equiv
C(w^0| \ \cdots \ |w^{n-1})
\bbmatrix{ d_{0}  & d_{1} & \cdots &d_{n-1}\\
0&d_{0} & \cdots & d_{n-2}\\
\vdots & \vdots &\vdots & \vdots  \\
0&   \cdots & 0& d_{0}
}.
\end{align}
If the projection equals the right side vector of the equation \ref{eqn:mainfb},
we compute 
$$(  \sigma_{1},  \cdots ,\sigma_{n})
=
 (\rho_{1}  , \cdots  ,\rho_{p}) C
(w^0| \cdots |w^{n-1}) 
$$
and thence 
$$(v^0| \cdots |v^{n}) = (w^0| \cdots |w^{n})(\sigma^{(n+1)})^{-1}.$$ 
By  Corollary \ref{cor:mainII}, part 2,
the matrix  with the full Krylov sequence $(v^0| \cdots |v^{n}) $ has the form $A+BKC,$
where $K=-(\rho_{1},   \cdots  ,\rho_{p}).$
This matrix has the annihilating polynomial $b_0s^n+ \cdots +b_n.$

Since the matrix $Q$
has rank $p\leq n,$
the equation \ref{eqn:mainfb} need not have a solution.
In this case the projection will not agree with the right side vector
and we can conclude that there is no matrix 
of the form $A+BKC$ whose annihilating polynomial is
$b_0s^n+\cdots +b_n.$

The discussion thus far is valid when $A,B,C$ are matrices
over arbitrary fields.

If we are dealing with matrices over the real field, 
the above procedure through projection
is the usual solution in the least square sense.
\begin{definition}
\label{def:rank_one}
Let $(w^0| \cdots |w^{n})$
be a full Krylov sequence of $A$ with $w^0\equiv B\mu.$
Let $(\rho_{1}'  , \cdots  ,\rho_{p}')$ be the least square solution 
to the equation \ref{eqn:mainfb} and let
$(\sigma_0',  \sigma_{1}',  \cdots ,\sigma_{n}')
$ with $\sigma_0'=1$ be such that
$$(  \sigma_{1}',  \cdots ,\sigma_{n}')
\equiv
 (\rho_{1}'  , \cdots  ,\rho_{p}') C
(w^0| \cdots |w^{n-1}) 
.$$
We will refer to $\boldsymbol{\sigma}'*\boldsymbol{d}$
as the rank one update to $\boldsymbol{d}$
for the desired $\boldsymbol{b}$
 and to $A+BKC$ with $K\equiv -\mu (\rho_{1}'  , \cdots  ,\rho_{p}')$
as the rank one update to $A$ for the desired annihilating polynomial
$b_0s^n+\cdots +b_n .$
\end{definition}

We now discuss the least square solution to equation \ref{eqn:mainfb}.
We state simple facts about this solution in the following lemma.
We have denoted the $l_2$ norm of a vector $x$ by $\|x\|.$

\begin{lemma}
\label{lem:lse}
Let 
\begin{align}
\label{eqn:lse}
{x^T}A={b^T},
\end{align}
where $A$ is an $m\times n$ matrix, be a linear equation.
Then, 
\begin{enumerate}
\item $b^T$ can be uniquely decomposed into $b'^T$ and $b"^T$
where $b'^T \in row(A)$ and $Ab"=0;$
\item $<b',b">=0;$ so $\|b'\|\leq \|b\|;$
\item the least square solution  $\hat{x}^T$ to equation \ref{eqn:lse}
 satisfies $\hat{x}^TA=b'^T ;$
if $(b^3)^T\equiv x^TA,$ for some $x,$ then
 $\|b-b'\|\leq \|b-b^3\|;$
\end{enumerate}
\end{lemma}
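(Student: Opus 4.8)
The plan is to recognize this lemma as a restatement of the fundamental theorem of linear algebra together with the variational characterization of orthogonal projection, and to organize the three parts so that each feeds the next. Throughout, the key observation is that the row vectors expressible as $x^T A$ (equivalently, the column vectors $A^T x$) form precisely the subspace $\mathrm{row}(A) \subseteq \Re^n$, while $\{b'' : Ab'' = 0\}$ is the null space $\ker(A)$, and that these two subspaces are mutually orthogonal. This last fact comes from a single computation: if $b' = A^T y \in \mathrm{row}(A)$ and $Ab'' = 0$, then $\langle b', b''\rangle = y^T A b'' = 0$. That one line already supplies the orthogonality asserted in part 2.

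For part 1 I would combine this orthogonality with the rank--nullity theorem, $\dim \mathrm{row}(A) + \dim \ker(A) = n$. Since the two subspaces are orthogonal they meet only in $0$, so their dimensions summing to $n$ forces the orthogonal direct-sum decomposition $\Re^n = \mathrm{row}(A) \oplus \ker(A)$. Defining $b'$ to be the orthogonal projection of $b$ onto $\mathrm{row}(A)$ and $b'' := b - b'$ then yields the required splitting with $b' \in \mathrm{row}(A)$ and $Ab'' = 0$; uniqueness is immediate, since any two such decompositions would differ by a vector lying in both summands, hence zero. Part 2 is then finished at once: orthogonality $\langle b', b''\rangle = 0$ is the computation already recorded, and the Pythagorean identity $\|b\|^2 = \|b'\|^2 + \|b''\|^2$ gives $\|b'\| \le \|b\|$.

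For part 3 I would note that, by definition, the least square solution $\hat{x}$ minimizes the residual $\|x^T A - b^T\| = \|A^T x - b\|$ over all $x$, i.e.\ it minimizes the distance from $b$ to the achievable set $\mathrm{row}(A)$. The minimizer of the distance from a point to a subspace is exactly the orthogonal projection of that point, namely $b'$; hence the optimal $\hat{x}$ satisfies $A^T \hat{x} = b'$, that is $\hat{x}^T A = b'^T$. For the final inequality, if $(b^3)^T = x^T A$ is any achievable right-hand side then $b^3 \in \mathrm{row}(A)$, so $b' - b^3 \in \mathrm{row}(A)$ is orthogonal to the residual $b - b'$, and the Pythagorean identity $\|b - b^3\|^2 = \|b - b'\|^2 + \|b' - b^3\|^2 \ge \|b - b'\|^2$ gives the claimed optimality of $b'$.

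The hard part here is essentially nonexistent: every step is classical linear algebra. The only point demanding care is bookkeeping the transpose conventions --- keeping straight that $x^T A = b^T$ is exactly solvable iff $b \in \mathrm{row}(A)$, and that \emph{least square} in this statement refers to minimizing the residual $\|x^T A - b^T\|$ (not $\|x\|$), so that the minimizing residual corresponds to projecting $b$ onto $\mathrm{row}(A)$ rather than to any condition on $x$ itself.
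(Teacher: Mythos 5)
Your proof is correct and complete. Note that the paper itself supplies no proof of this lemma --- it is introduced with ``We state simple facts about this solution in the following lemma'' and then used without justification (e.g.\ in part 2 of Theorem \ref{thm:lse_equiv}) --- so there is no authorial argument to compare against; your rendering via the orthogonal decomposition $\Re^n = \mathrm{row}(A) \oplus \ker(A)$, the Pythagorean identity, and the variational characterization of the projection is exactly the standard argument the authors evidently had in mind, and it correctly handles the one point worth flagging, namely that ``least square'' here refers to minimizing the residual $\|x^TA-b^T\|$ rather than any condition on $x$.
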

We now  have the following useful result.
\begin{theorem}
\label{thm:lse_equiv}
Let, as in Definition \ref{def:rank_one},
 $\boldsymbol{\sigma}'*\boldsymbol{d}$
be the rank one update to $\boldsymbol{d}$
for  $\boldsymbol{b}.$
Then 
\begin{enumerate}
\item
$\|\boldsymbol{b}-(\boldsymbol{\sigma}'*\boldsymbol{d})\|\leq \|\boldsymbol{b}- \boldsymbol{d}\|,$
with equality precisely when the least square solution to
equation \ref{eqn:mainfb} is the zero solution.
\item If $\boldsymbol{\sigma}$ is such that $(\sigma_1, \cdots , \sigma _n)
\in row(C(w^0| \cdots |w^{n}),$ then $\|\boldsymbol{b}-(\boldsymbol{\sigma}'*\boldsymbol{d})\|\leq \|\boldsymbol{b}-(\boldsymbol{\sigma}*\boldsymbol{d})\|.$
\end{enumerate}
\end{theorem}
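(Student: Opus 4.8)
The plan is to show that, once the fixed leading coordinate is stripped off, the map $\boldsymbol{\sigma}\mapsto \boldsymbol{\sigma}*\boldsymbol{d}$ carries the $l_2$-distance from $\boldsymbol{b}$ \emph{exactly} onto the residual of the linear least-squares problem already written down in equation \ref{eqn:mainfb}. Both claims then fall out of the minimizing property of its least-square solution together with Lemma \ref{lem:lse}.

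The key step is an isometry observation. Because $\sigma_0=d_0=b_0=1$, the definition of $*$ gives $(\boldsymbol{\sigma}*\boldsymbol{d})_0=\sigma_0 d_0=1=b_0$ for every admissible $\boldsymbol{\sigma}$, so the zeroth coordinate never contributes to $\boldsymbol{b}-\boldsymbol{\sigma}*\boldsymbol{d}$. Reading the convolution in the remaining coordinates, exactly as in the passage from equation \ref{eqn:extend_mainfb1} to equation \ref{eqn:extend_mainfb1ib}, yields $((\boldsymbol{\sigma}*\boldsymbol{d})_1,\cdots,(\boldsymbol{\sigma}*\boldsymbol{d})_n)=(d_1,\cdots,d_n)+(\sigma_1,\cdots,\sigma_n)D$, where $D$ is the $n\times n$ upper-triangular Toeplitz matrix with first row $(d_0,\cdots,d_{n-1})$ appearing in equation \ref{eqn:extend_mainfb1ib}. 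Writing $r\equiv (b_1,\cdots,b_n)-(d_1,\cdots,d_n)$, this gives
\begin{align*}
\|\boldsymbol{b}-\boldsymbol{\sigma}*\boldsymbol{d}\| = \| r-(\sigma_1,\cdots,\sigma_n)D\|,
\end{align*}
and whenever $(\sigma_1,\cdots,\sigma_n)=\rho^T C(w^0|\cdots|w^{n-1})$ the right-hand side becomes $\|r-\rho^T Q\|$, with $Q$ the matrix defined just after equation \ref{eqn:mainfb}. Thus the distance from $\boldsymbol{b}$ to $\boldsymbol{\sigma}*\boldsymbol{d}$ is literally the residual of equation \ref{eqn:mainfb}.

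Granting this identity, Part 1 follows by taking $\rho^T=0$: then $\boldsymbol{\sigma}*\boldsymbol{d}=\boldsymbol{d}$ and the residual is $\|r\|=\|\boldsymbol{b}-\boldsymbol{d}\|$, while the least-square solution $(\rho_1',\cdots,\rho_p')$ of Definition \ref{def:rank_one} minimizes $\|r-\rho^T Q\|$, so $\|\boldsymbol{b}-(\boldsymbol{\sigma}'*\boldsymbol{d})\|\le\|\boldsymbol{b}-\boldsymbol{d}\|$. For the equality case I would invoke Lemma \ref{lem:lse}: decompose $r=r'+r''$ with $r'\in\mathrm{row}(Q)$ and $Qr''=0$; then the least-square residual equals $\|r''\|$ and $\|r\|^2=\|r'\|^2+\|r''\|^2$, so equality holds precisely when $r'=0$, i.e. when the (minimum-norm) least-square solution vanishes. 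Part 2 is immediate from the same identity: any $\boldsymbol{\sigma}$ with $(\sigma_1,\cdots,\sigma_n)\in\mathrm{row}(C(w^0|\cdots|w^{n-1}))$ equals $\rho^T C(w^0|\cdots|w^{n-1})$ for some $\rho^T$, whence $\|\boldsymbol{b}-(\boldsymbol{\sigma}*\boldsymbol{d})\|=\|r-\rho^T Q\|$, and since $(\rho_1',\cdots,\rho_p')$ is a global minimizer over all $\rho^T$ we get $\|\boldsymbol{b}-(\boldsymbol{\sigma}'*\boldsymbol{d})\|\le\|\boldsymbol{b}-(\boldsymbol{\sigma}*\boldsymbol{d})\|$.

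The only genuinely delicate point is reading the equality clause of Part 1 correctly: by Lemma \ref{lem:lse} the vector $r'$ is the orthogonal projection of $r$ onto $\mathrm{row}(Q)$ and is produced as $(\rho')^T Q$, so under the minimum-norm convention the condition $r'=0$ coincides with $(\rho')^T=0$, and the equality then reduces to the Pythagorean identity $\|r\|^2=\|r'\|^2+\|r''\|^2$. Everything else is bookkeeping with the convolution together with the correspondence between rank-one output feedback and row-space membership supplied by Corollary \ref{cor:mainII}.
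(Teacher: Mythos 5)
Your proposal is correct and follows essentially the same route as the paper: the identity $\boldsymbol{b}-(\boldsymbol{\sigma}*\boldsymbol{d})=(\boldsymbol{b}-\boldsymbol{d})-(0,\sigma_1,\cdots,\sigma_n)*\boldsymbol{d}$, the orthogonal (Pythagorean) decomposition of $\boldsymbol{b}-\boldsymbol{d}$ coming from the least-squares solution of equation \ref{eqn:mainfb}, and part 3 of Lemma \ref{lem:lse} for the second claim. Your worry about the minimum-norm convention in the equality clause is not actually needed here, since $Q=C(w^0|\cdots|w^{n-1})D$ has full row rank ($C$ has independent rows and both $(w^0|\cdots|w^{n-1})$ and the Toeplitz factor are invertible), so the least-squares solution is unique and vanishes exactly when the projection does.
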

\begin{proof}
1. For any $\boldsymbol{\sigma}",$ $(\boldsymbol{b}-(\boldsymbol{\sigma}"*d))$
can be written as $\boldsymbol{b}-\boldsymbol{d}-(0,\sigma_1", \cdots , \sigma_n")*d.$ In the present case, when $\boldsymbol{\sigma}'$ corresponds
to the least square solution of equation \ref{eqn:mainfb},
we have an orthogonal decomposition
of $\boldsymbol{b}-\boldsymbol{d}$ into $(0,\sigma_1', \cdots , \sigma_n')*\boldsymbol{d}$
and
$\boldsymbol{b}-\boldsymbol{d}-(0,\sigma_1', \cdots , \sigma_n')*\boldsymbol{d}.$
Since $\boldsymbol{b}-\boldsymbol{d}-(0,\sigma_1', \cdots , \sigma_n')*\boldsymbol{d}= \boldsymbol{b}-(\boldsymbol{\sigma}'*\boldsymbol{d}),$
it follows that $\|\boldsymbol{b}-(\boldsymbol{\sigma}'*\boldsymbol{d})\|\leq \|\boldsymbol{b}-\boldsymbol{d}\|.$
For the two norms to be equal, we need $(0,\sigma_1', \cdots , \sigma_n')*\boldsymbol{d}$
to be zero, i.e., the least square solution to equation \ref{eqn:mainfb}
to be zero.

2. This follows from part 3 of Lemma \ref{lem:lse}.
\end{proof}
\section{The multi input-multi output (MIMO) case.}
\label{sec:mimo}
The MIMO case is much harder than the single input or single output case.
In this section we suggest a practical approach to this problem
by breaking it into repeated rank one problems.

In the MIMO case, the matrix $A+BKC$ obtained by output feedback
can have the matrix $K$ of rank as high as $m,$ the number of columns
of $B.$ 
We make some preliminary assumptions taking the underlying
field to be $\Re^n.$

We remind the reader of the  assumption made in Subsection \ref{subsec:prelim},
that the controllable space for a system defined through equations \ref{eqn:state11}, \ref{eqn:state22}, is $\Re^n.$
If we pick a vector $w^0$ at random from $col(B),$
it can be shown, by standard arguments, that  with probability $1,$ the matrix $A$ would have, with $w^0$ as the initial vector, a full Krylov sequence $\omega \equiv (w^0, \cdots , w^n).$ 
If we pick 
$m$ vectors at random from $col(B),$ with probability
$1,$
they would be independent 
and for each such $w^0(i),$ we would get a full Krylov sequence with initial vector $w^0(i).$
Therefore we assume, without loss of generality, that 
the columns of $B$ are linearly independent and 
that these columns are initial vectors of  full Krylov sequences of $A.$

We then have the following elementary but useful result.
\begin{theorem}
Let the columns of $B, \ $ $w^0(1), \cdots w^0(m),$ 
be initial vectors of full Krylov sequences $\omega (1), \cdots , \omega (m),$
respectively of $A.$
\begin{enumerate}
\item Let $\omega $ be any full Krylov sequence of $A$ with initial vector in $col(B).$
Then $\omega \equiv (w^0, \cdots, w^n)$ is a linear combination of $\omega (i), i=1, \cdots m,$
i.e., there exist $\alpha_i , i=1, \cdots m$ s.t. 
$w^j=\Sigma _{i=1}^m \alpha_i w^j(i), j=0, \cdots , n.$
\item Let $\omega \equiv (w^0, \cdots, w^n)$  
be a linear combination of $\omega (i), i=1, \cdots m$
and let $w^0, \cdots, w^{n-1}$ be linearly independent.
Then $\omega$ is a full Krylov sequence of $A$ with initial vector
$\Sigma _{i=1}^m \alpha_i w^0(i)\in col(B).$
\end{enumerate}
\end{theorem}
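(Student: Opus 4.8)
The plan is to exploit the single elementary fact that, for each fixed $j$, the map $v\mapsto A^jv$ is linear, so that any linear relation holding at the level of the initial vectors propagates termwise to the whole Krylov sequences. Both parts then follow by combining this with the given identification of $col(B)$ with the span of the initial vectors $w^0(1),\dots,w^0(m)$, which are by hypothesis exactly the columns of $B$.

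For part 1, I would first record that $col(B)=\mathrm{span}\{w^0(1),\dots,w^0(m)\}$ since these are the columns of $B$. As $\omega$ is a full Krylov sequence with initial vector $w^0\in col(B)$, there exist scalars $\alpha_1,\dots,\alpha_m$ with $w^0=\sum_{i=1}^m\alpha_iw^0(i)$. Applying $A^j$ to both sides and using the defining relations $w^j=A^jw^0$ and $w^j(i)=A^jw^0(i)$ then yields $w^j=\sum_{i=1}^m\alpha_iw^j(i)$ for every $j=0,\dots,n$, which is precisely the asserted decomposition.

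For part 2, the coefficients $\alpha_i$ are now given, satisfying $w^j=\sum_{i=1}^m\alpha_iw^j(i)$ for all $j$, and $w^0,\dots,w^{n-1}$ are assumed linearly independent. I would set $w^0=\sum_{i=1}^m\alpha_iw^0(i)$, which lies in $col(B)$ because each $w^0(i)$ is a column of $B$. Using $w^j(i)=A^jw^0(i)$ and pulling $A^j$ outside the finite sum gives $w^j=A^j\bigl(\sum_{i=1}^m\alpha_iw^0(i)\bigr)=A^jw^0$ for $j=1,\dots,n$; together with the assumed independence of $w^0,\dots,w^{n-1}$, this matches the definition of a full Krylov sequence with initial vector $w^0$.

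I do not expect any genuine obstacle here; the theorem is essentially a linearity bookkeeping statement. The only point deserving explicit care is that the same tuple $(\alpha_1,\dots,\alpha_m)$ serves simultaneously for all $j$, which is automatic once the relation is read at the level of initial vectors and $A^j$ is applied. The independence hypothesis in part 2 is simply carried along and requires no argument of its own.
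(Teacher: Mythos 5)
Your proof is correct and follows essentially the same route as the paper's: both parts rest on the linearity of $v\mapsto A^jv$, the fact that the columns of $B$ form a basis of $col(B)$, and the observation that the initial vector determines the whole Krylov sequence. Your write-up merely makes explicit (by applying $A^j$ termwise) what the paper states more tersely, so there is no substantive difference.
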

\begin{proof}
1. We note that a full Krylov sequence $\omega \equiv (w^0, \cdots, w^n)$ 
of $A$ is defined by $w^j=A^{(j-1)}w^0, j=1, \cdots n.$
Therefore the initial vector of a full Krylov sequence of $A$ fully determines it.
The result now follows from the fact that  $w^0(1), \cdots w^0(m),$
form a basis for $col(B).$

2. Every linear  combination $\omega \equiv (w^0, \cdots, w^n) $ of $\omega (i), i=1, \cdots m,$
clearly satisfies $w^j=A^{(j-1)}w^0, j=1, \cdots n.$
It therefore is a full Krylov sequence of $A$ provided the first $n$ vectors
of the sequence are linearly independent.
\end{proof}

\begin{remark}
It is routine to note that every random linear combination of
 $\omega (i), i=1, \cdots m,$
will have its first $n$ vectors linearly independent, 
 and, therefore, be a full Krylov sequence of $A,$
with probability $1.$

\end{remark}

A natural question for the case where the number of columns of $B$ is greater
than one, is whether a desired annihilating polynomial can be achieved
by rank one update
using some vector of $col(B)$ as the initial vector of  a full Krylov sequence of
$A.$ We now reduce  this problem to the solution of equation \ref{eqn:mainfb2}
given below.

We begin with a matrix $B$ whose columns $w^0(1), \cdots ,w^0(m),$ are linearly independent and
form initial vectors of full Krylov sequences  $\omega (1), \cdots , \omega (m),$ 
of the matrix $A.$ 

Let $\omega \equiv (w^0, \cdots, w^n)$ be a linear combination of $\omega (i), i=1, \cdots m,$
Thus we have
$$(w^0, \cdots, w^n)=\Sigma _{i=1}^m\alpha_i(w^0(i), \cdots, w^n(i)).$$
Equation \ref{eqn:mainfb} now reduces to 
\begin{align}
\label{eqn:mainfb2}
\ppmatrix{ 
\rho_{1}  & \cdots  &\rho_{p}}C
\ppmatrix{
\Sigma _i\alpha_iw^0(i)|& \cdots &|\Sigma _i\alpha_iw^{n-1}(i)} 
\bbmatrix{ d_{0}  & d_{1} & \cdots &d_{n-1}\\
0&d_{0} & \cdots & d_{n-2}\\
\vdots & \vdots &\vdots & \vdots  \\
0&   \cdots & 0& d_{0}
}
\end{align}
\begin{align}
=
\ppmatrix{ 
  b_{1}& \cdots  &b_{n}}-\ppmatrix{ 
  d_{1}& \cdots  &d_{n}}.
\end{align}
If a solution exists to equation \ref{eqn:mainfb2}, then 
the value of $\alpha_i, i=1, \cdots m$
will yield the initial vector 
$w^0=\Sigma _i\alpha_iw^0(i)$
and thence the full Krylov sequence (provided the first $n$ columns are linearly independent) $$(w^0, \cdots, w^n)\equiv (\Sigma _i\alpha_iw^0(i)| \cdots |\Sigma _i\alpha_iw^{n}(i)).$$
As  in Subsection \ref{subsec:rank_one}, $\boldsymbol{\sigma}\equiv (1,\sigma_1, \cdots, \sigma_n),$ where $(\sigma_1, \cdots, \sigma_n)\equiv 
(\rho_{1}  , \cdots  ,\rho_{p})C (w^0, \cdots, w^n)$ 
and $K$ would be given by 
\begin{align} 
 \ppmatrix{- \alpha_1\\\vdots\\-\alpha_m}
\ppmatrix{ 
 \rho_{1}  & \cdots  &\rho_{p}}.
\end{align} 
\begin{remark}
Equation \ref{eqn:mainfb2} is a nonlinear equation 
where every term occurs as a product $\rho_i\alpha_j.$
A simple approach to solving this problem would be to start with 
an initial guess $\boldsymbol{\alpha }$ vector, compute the
$\boldsymbol{\rho }$ vector by the projection method,
then in the next iteration compute $\boldsymbol{\alpha }$
by the projection method  and repeat.
This procedure will converge to a limit, because the distance of the current 
vector value
of the left hand side of the equation \ref{eqn:mainfb2} 
to the right hand side is monotonically decreasing towards zero.
But the limiting distance need not be zero.
If this happens, one could use methods such as Newton-Raphson
to get out of the local minimum and repeat the earlier procedure.
\end{remark}
\subsection{An approach to the solution of the general problem}
In this subsection we make a suggestion for a practical approach
to the  problem of characterizing annihilating polynomials
achievable by output feedback, for the MIMO case.

Let  $\boldsymbol{d}(s), \boldsymbol{d}_K(s)$ be, respectively, the annihilating polynomials of $A$ and 
of $A+BKC.$  Let $\boldsymbol{b}(s)$ be the desired 
annihilating polynomial. If there is a $\boldsymbol{d}_K(s)$ 
such that 
$\boldsymbol{b}(s)=\boldsymbol{d}_K(s),$ 
we have  
$\|\boldsymbol{b}-\boldsymbol{d}_K\|=0.$ 
While $rank(K)$ might be greater 
one, it may be that there exists a matrix $K'$ of rank one 
which 
atleast satisfies $\|\boldsymbol{b}-\boldsymbol{d}_{K'}\|< \|\boldsymbol{b}-\boldsymbol{d}\|.$
If that is so, the rank one update method will help us find such a matrix.

Below, we suggest a practical approach to solving the general problem
based on  this possibility, as stated in Condition \ref{con:1}.


%
We need a  preliminary lemma which states that if we cannot move closer to 
the desired annihilating polynomial through rank one updates
using the columns of $B,$ we cannot do so even using any linear combination
of columns of $B.$ 
\begin{lemma}
Let $A,B,C$ be as in equations \ref{eqn:state11}, \ref{eqn:state22}.
Let $\omega (i)\equiv (w^0(i), \cdots, w^n(i))\ i=1, \cdots m,$
be full Krylov sequences of $A$ with their initial vectors being the columns of $B.$
Let the least square solution to 
\begin{align}
\label{eqn:mainfbi2}
\ppmatrix{ 
\rho_{1}(i)  & \cdots  &\rho_{p}(i)}C
\ppmatrix{
w^0(i)|& \cdots &|w^{n-1}(i)} 
\bbmatrix{ d_{0}  & d_{1} & \cdots &d_{n-1}\\
0&d_{0} & \cdots & d_{n-2}\\
\vdots & \vdots &\vdots & \vdots  \\
0&   \cdots & 0& d_{0}
}
\end{align}
\begin{align}
=
\ppmatrix{ 
  b_{1}& \cdots & \cdots &b_{n}}-\ppmatrix{ 
  d_{1}& \cdots & \cdots &d_{n}},
\end{align}
for $i=1, \cdots , m$ be the zero vector.
Let $$\omega \equiv 
(w^0, \cdots, w^n)= (\Sigma _i\alpha_iw^0(i)| \cdots |\Sigma _i\alpha_iw^{n}(i)).$$
Then, the least square solution to 
\begin{align}
\label{eqn:mainfbi3}
\ppmatrix{ 
\rho_{1}  & \cdots  &\rho_{p}}C
\ppmatrix{
w^0|& \cdots &|w^{n-1}} 
\bbmatrix{ d_{0}  & d_{1} & \cdots &d_{n-1}\\
0&d_{0} & \cdots & d_{n-2}\\
\vdots & \vdots &\vdots & \vdots  \\
0&   \cdots & 0& d_{0}
}
\end{align}
\begin{align}
=
\ppmatrix{ 
  b_{1}& \cdots & \cdots &b_{n}}-\ppmatrix{ 
  d_{1}& \cdots & \cdots &d_{n}},
\end{align}
is also the zero solution.
\end{lemma}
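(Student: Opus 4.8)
The plan is to reinterpret both the hypothesis and the conclusion through the projection description of least-square solutions given in Lemma \ref{lem:lse}, and then to exploit the fact that the coefficient matrix of equation \ref{eqn:mainfbi3} is literally the $\alpha_i$-weighted sum of the coefficient matrices appearing in the equations \ref{eqn:mainfbi2}.

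First I would fix notation to expose this linearity. Write $r^T \equiv (b_1,\dots,b_n)-(d_1,\dots,d_n)$ for the common right-hand side, let $D$ denote the upper triangular Toeplitz matrix built from $(d_0,\dots,d_{n-1})$ that multiplies on the right in all these equations, and set $M_i \equiv C(w^0(i)|\cdots|w^{n-1}(i))D$ and $M\equiv C(w^0|\cdots|w^{n-1})D$. Then equation \ref{eqn:mainfbi2} reads $\rho(i)^T M_i = r^T$ and equation \ref{eqn:mainfbi3} reads $\rho^T M = r^T$. By part 3 of Lemma \ref{lem:lse}, the least-square solution $\hat{x}^T$ of $x^TA=b^T$ realizes exactly the projection $b'$ of $b$ onto $row(A)$, through $\hat{x}^T A=b'^T$. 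Hence the least-square solution is the zero vector precisely when $b'=0$, that is, when $r$ is orthogonal to $row(A)$. So the hypothesis says $r\perp row(M_i)$ for every $i$, and the goal becomes to show $r\perp row(M)$.

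The key step is the linear dependence of $M$ on the $M_i$. Since $\omega=\sum_i\alpha_i\omega(i)$ yields $(w^0|\cdots|w^{n-1})=\sum_i\alpha_i(w^0(i)|\cdots|w^{n-1}(i))$, and since left multiplication by $C$ and right multiplication by $D$ are both linear operations on matrices, we obtain $M=\sum_i\alpha_i M_i$. Consequently each row of $M$ is an $\alpha_i$-combination of the corresponding rows of the $M_i$, so $row(M)\subseteq \sum_i row(M_i)$. This inclusion is the whole content of the argument.

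Finally I would assemble the pieces. Because $r\perp row(M_i)$ for each $i$, the vector $r$ lies in every $row(M_i)^\perp$, hence in their intersection $\bigl(\sum_i row(M_i)\bigr)^\perp$, and therefore $r$ annihilates the subspace $row(M)$. Thus the projection of $r$ onto $row(M)$ vanishes, and by Lemma \ref{lem:lse} the least-square solution of equation \ref{eqn:mainfbi3} is the zero vector, as claimed. I do not anticipate a genuine obstacle: the only point requiring care is the translation of the phrase ``the least-square solution is the zero vector'' into the orthogonality statement $r\perp row(\cdot)$ via part 3 of Lemma \ref{lem:lse}, where for the conclusion one reads ``the'' least-square solution as the minimum-norm one, so that $\hat{\rho}^T M=0$ forces $\hat{\rho}=0$; after that the result is the one-line row-space inclusion above.
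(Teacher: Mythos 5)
Your proposal is correct and follows essentially the same route as the paper: translate ``the least-square solution is zero'' into orthogonality of the common right-hand side $r$ to $row\bigl(C(w^0(i)|\cdots|w^{n-1}(i))D\bigr)$, observe that the coefficient matrix of equation \ref{eqn:mainfbi3} is the $\alpha_i$-weighted sum of those of equations \ref{eqn:mainfbi2} by linearity, and conclude $r$ is orthogonal to the combined row space. The paper phrases this as $M_i\,r=0$ for each $i$ implying $(\sum_i\alpha_i M_i)\,r=0$, which is the same one-line linearity argument you package via the inclusion $row(M)\subseteq\sum_i row(M_i)$.
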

\begin{proof}
Let $ (f_{1}  ,  \cdots  ,f_{n})$ 
denote the vector on the right side of equation \ref{eqn:mainfbi2}.
When the least square solution is zero we have 
\begin{align}
\label{eqn:mainfbi4}
C\ppmatrix{
w^0(i)|& \cdots &|w^{n-1}(i)}\bbmatrix{ d_{0}  & d_{1} & \cdots &d_{n-1}\\
0&d_{0} & \cdots & d_{n-2}\\
\vdots & \vdots &\vdots & \vdots  \\
0&   \cdots & 0& d_{0}
} 
\ppmatrix{ 
 f_{1}  \\  \vdots  \\f_{n}}=
0, i=1, \cdots ,m.
\end{align}
It follows that 
\begin{align}
\label{eqn:mainfbi5}
C[\Sigma _i\alpha_i\ppmatrix{
w^0(i)|& \cdots &|w^{n-1}(i)}]\bbmatrix{ d_{0}  & d_{1} & \cdots &d_{n-1}\\
0&d_{0} & \cdots & d_{n-2}\\
\vdots & \vdots &\vdots & \vdots  \\
0&   \cdots & 0& d_{0}
} 
\ppmatrix{ 
 f_{1}  \\  \vdots  \\f_{n}}=
C\ppmatrix{
w^0|& \cdots &|w^{n-1}} 
\ppmatrix{ 
 f_{1}  \\  \vdots  \\f_{n}}=0.
\end{align}
This means the least square solution to equation \ref{eqn:mainfbi3}
is also zero. 
\end{proof}
We can now develop a practical algorithm for
MIMO output feedback case 
based on the condition that we state 
formally below.

\begin{condition}
\label{con:1}
Let $A,B,C$ be as in equations \ref{eqn:state11}, \ref{eqn:state22}.
Let $\boldsymbol{d}(s)\equiv s^n+d_1s^{n-1}+\cdots +d_n$ be the annihilating
polynomial of $A.$ Let $(b_1, \cdots , b_n)\in \Re^n.$ 
Let $A+BKC,$ for some $K,$ have the annihilating polynomial 
$s^n+b_1's^{n-1}\cdots +b_n'$ 
such that $$\|\boldsymbol{b}-\boldsymbol{b'}\|=\|(b_1, \cdots ,b_n)-(b_1', \cdots ,b_n')\|<\|(b_1, \cdots ,b_n)-(d_1, \cdots ,d_n)\|=\|\boldsymbol{b}-\boldsymbol{d}\|,$$
where $\|x\|$ denotes the $l_2$ norm of $x.$

Then, there is a matrix $K"$ with a single nonzero row such that
$A+BK"C$ has the annihilating polynomial $ s^n+b"_1s^{n-1}+\cdots +b"_n$
which satisfies 
$$\|\boldsymbol{b}-\boldsymbol{b"}\|=\|(b_1, \cdots ,b_n)-(b_1", \cdots ,b_n")\|<\|(b_1, \cdots ,b_n)-(d_1, \cdots ,d_n)\|=\|\boldsymbol{b}-\boldsymbol{d}\|,$$
\end{condition}

\begin{remark}
The above condition is stronger than required for numerical computations.
For a given coefficient vector $(b_1, \cdots , b_n)$ of the desired
annihilating polynomial, even if the condition  fails only over a measure zero set of vectors $(b_1', \cdots,b_n')$ in  $\Re^n,$ 
it would be adequate for our purposes,
since we will be generating a sequence of terms  of the kind $(b_1", \cdots,b_n")$ and testing convergence  through some
`tolerance' value for nearness of successive terms.
%
%
\end{remark}

Let $A,B,C$ be as in equations \ref{eqn:state11}, \ref{eqn:state22}.
Let $\boldsymbol{d}(s)\equiv s^n+d_1s^{n-1}+\cdots +d_n$ be the annihilating
polynomial of $A.$ 
Let $\boldsymbol{b}(s)\equiv s^n+b_1s^{n-1}+\cdots +b_n$ be the desired annihilating
polynomial for a matrix of the form $A+BK'C,$ with $(b_1, \cdots , b_n)\ne (d_1, \cdots , d_n).$

We describe below an informal algorithm which,  if Condition  \ref{con:1} is true,
will yield  a matrix $K$ with 
$A+BKC$ having its annihilating polynomial
`close' to the desired one
or with a statement that such $K$ 
can be found within the permiited number of iterations.

The output of the algorithm is based on the convergence of a sequence 
of $n-$ vectors that it generates.
\begin{algorithm}
\label{alg:1}
{\bf Input}\hspace{1cm}
Matrices $A,B,C$ as in equations \ref{eqn:state11}, \ref{eqn:state22}.\\
$\boldsymbol{d}^0(s)\equiv s^n+d_1^0s^{n-1}+\cdots +d_n^0,$ the annihilating
polynomial of $A,$\\
$\boldsymbol{b}(s)\equiv s^n+b_1s^{n-1}+\cdots +b_n,$ the desired annihilating
polynomial.\\
A positive `tolerance number' $\epsilon .$ \\
An upper bound $N$ on number of iterations.

{\bf Output} \hspace{1cm}
A matrix $K $ such that the annihilating
polynomial $\boldsymbol{d}(s)$ of $A+BKC$ satisfies 
$\|\boldsymbol{b}-\boldsymbol{d}\|< \epsilon ;$
OR\\
A statement that no such $K$ can be found for number of iterations less than $N.$\\

$\boldsymbol{d}\equiv \boldsymbol{d}^0;$ $K\equiv \mbox{zero matrix of dimension }m\times p;$\\
\noindent{While $\{$ $\|\boldsymbol{b}-\boldsymbol{d}\|\geq \epsilon  $ and number of iterations $<N \ \}$}\\
do $\{$

\hspace{1cm} For $\{ $ $j=1, \cdots m$ $\}$

\hspace{1cm} Repeat $\{$ \\

\hspace{1cm} Construct the  full Krylov sequence $(w^0, \cdots, w^n)$ of $A$ with initial vector as the  

\hspace{1cm}  $j^{th}$ column of $B.$


\hspace{1cm} As  in Subsection \ref{subsec:rank_one}, 
 compute $(\rho_{1}  , \cdots  ,\rho_{p})C (w^0| \cdots| w^n),$

\hspace{1cm} by solving equation \ref{eqn:mainfb}. 







\hspace{1cm} Compute
\begin{align} 
K_j =
 \ppmatrix{- \alpha_1\\\vdots\\-\alpha_m}
\ppmatrix{ 
 \rho_{1}  & \cdots  &\rho_{p}},
\end{align}

\hspace{1cm} where $\alpha_j=1, \ \alpha_i=0, \ i\ne j.$

\hspace{1cm} Compute $(\sigma_1, \cdots, \sigma_n) =
(\rho_{1}  , \cdots  ,\rho_{p})C (w^0, \cdots, w^n).$
Set
$\boldsymbol{\sigma}\equiv (1,\sigma_1, \cdots, \sigma_n),$ 

\hspace{1cm} Compute $\boldsymbol{d^j}\equiv \boldsymbol{\sigma}*\boldsymbol{d}.$

\hspace{1cm} Set 
$\boldsymbol{d^{j}}(s)\equiv s^n+d_1^{j}s^{n-1}+\cdots +d_n^{j}$
as the annihilating polynomial of $A+BK_jC,$

\hspace{1cm} Set $\boldsymbol{d'}\equiv \boldsymbol{d^r},$ where 
$\|\boldsymbol{b}-\boldsymbol{d^r}\| =min_{j=1}^m\|\boldsymbol{b}-\boldsymbol{d^j}\|, 
$

\hspace{1cm}
 Set $K'$ to $K_r,$ $Knew$ to  $K+K'$ and $K$ to $Knew.$

\hspace{1cm}$\}$\\


\hspace{0.3cm} reset $Anew$ to $A+BKnewC,$ $A$ to $Anew,$ $\boldsymbol{d}$ to $\boldsymbol{d'},$
$\boldsymbol{b}$ unchanged.\\

\hspace{0.1cm} $\}$

\end{algorithm}

{\it Justification based on Condition \ref{con:1}}

At any stage in the progress of the algorithm we have a current annihilating
polynomial $\boldsymbol{d}^i(s),$ a current matrix $A(i)$ and a matrix 
$K_{rem}^i,$ such that the desired polynomial $\boldsymbol{b}(s)$
annihilates $A(i)+BK_{rem}^iC.$ Therefore, by Condition \ref{con:1}, there is a matrix $K_{increm}$
with a single nonzero row such that the annihilating polynomial
$\boldsymbol{d}^{i+1}(s)$ of  $A+BK_{increm}C$ 
satisfies
$$ \|\boldsymbol{b}-\boldsymbol{d}^{i+1}\|< \|\boldsymbol{b}-\boldsymbol{d}^{i}\|.$$
The sequence $ (\cdots ,\|\boldsymbol{b}-\boldsymbol{d}^{i}\|, \cdots )$
is nonegative and monotonically decreasing. Therefore it converges
to some $\|\boldsymbol{b}-\boldsymbol{d}^{final}\|.$
We must have $\boldsymbol{b}=\boldsymbol{d}^{final}$ 
as otherwise we  can find by rank one update a $\boldsymbol{d}'$
such that $
\|\boldsymbol{b}-\boldsymbol{d}'\|< \|\boldsymbol{b}-\boldsymbol{d}^{final}\|.$\\

In the above algorithm the main computations in the inner loop
are
\begin{itemize}
\item Computation of the full Krylov sequence $(w^0, \cdots, w^n)$ of $A.$
This requires $n$ multiplications of the form $Ax.$
This is $O(n^3).$
\item Solution of the equation \ref{eqn:mainfb}.
This is $O(n^3).$
\item Computation of $K_j.$ This is  $O(n).$
\item Computation of $BK_jC.$ This is $O(mn).$
\item  Computation of $\boldsymbol{d^j}\equiv \boldsymbol{\sigma}*\boldsymbol{d}.$
This is  $O(n^2).$
(This can be speeded up by FFT but is not of consequence here.)
\end{itemize}

Let us say that an $n\times n$  matrix $T$ is available `implicitly' if, given an $n-$ vector $x$ there is a subroutine available which computes $Tx.$
We will take such a computation to be $O(n^2).$ (In practice, for instance 
in the case of large electrical networks, because of sparsity, it would be substantially less.)
Suppose $A,B,C$ are available only implicitly.
There would then be no need to compute them explicitly for the algorithm
to proceed.
The above complexity calculation would remain valid.\\


\subsection{Coefficients and roots of polynomials}
The output feedback problem, as originally posed, was about 
forcing all the eigenvalues of the system after feedback to move to the 
complex left half plane. How relevant to this problem is the 
rank one update method and Algorithm \ref{alg:1}?

The natural use of Algorithm I for shifting eigenvalues, that suggests itself, is to move the roots to a more desirable 
location altering the coefficient vector of the annihilating polynomial 
slightly. This could be done  through a rank one update using  a random
linear combination of the columns of $B$
 as the initial vector of the full Krylov sequence for $A.$ 
This would involve one inner loop iteration of Algorithm \ref{alg:1}.
More explicitly, let 
$s_1, \cdots , s_n$ be the roots of the real annihilating polynomial $\boldsymbol{d}(s).$ 
Suppose $s_1+\epsilon _1, \cdots , s_n+\epsilon _n,$
where $\epsilon _1, \cdots , \epsilon _n,$
is  a sufficiently small complex vector,
preserves complex conjugation and is in a more desirable location
(for instance, if all the roots in the right half plane move closer to the imaginary axis).
Take $\boldsymbol{b}(s)$ 
to have its roots as $s_1+\epsilon _1, \cdots , s_n+\epsilon _n$
and do the rank one update of Subsection \ref{subsec:rank_one}.\\
\subsection{Numerical evidence}
Some numerical experiments were performed to test 
\begin{itemize}
\item
the efficacy
of Algorithm I for checking if a desired annihilating polynomial
can be achieved through output feedback and 
\item 
to examine if the algorithm can be used to shift eigenvalues 
to desired locations.
\end{itemize}
The results, which are encouraging, are available in  \ref{sec:exp}.
\section{Conclusion}
\label{sec:conclusion}
We have developed an alternative way of looking at 
old ideas of state feedback that is computationally useful.

We have presented a `rank one update'  method for
modifying annihilating polynomials of state transition matrices
through output feedback. This method characterizes all 
polynomials which can be annihilating polynomials
of matrices of the kind $A+BKC,$ when $A,B,C$ are given, 
$B$ has a single column or $C$ has a single row. 

We have also presented a practical approach to handle the case where
$B$ has more than one column (MIMO) by repeatedly using the
`rank one update' method.

Numerical experiments appear to support our approach to the MIMO
case. The approach also appears promising as a way of altering
eigenvalues of systems to desired locations.

Our method alters coefficients of polynomials.
For it to work, it is necessary that the coefficient vector
of the annihilating polynomial of the matrix and its full Krylov sequence
be very nearly orthogonal.
For this to hold,
{\it methods  of  scaling coefficient vectors and of scaling roots}  will be needed
in general.
A careful error analysis taking scaling into account 
is an immediate future goal.

A harder problem is to examine if  Algorithm I can be justified in a generic sense.
\section{Acknowledgement}
Hariharan Narayanan was partially supported by a Ramanujan fellowship.

\appendix
\section{Proof of Implicit Inversion Theorem}
Below, 
when $X$, $Y$ are disjoint, a vector $f_{X\uplus  Y}$ on $X\uplus Y$ 
is written as $f_X\oplus f_Y;$
$\mathcal{K}_X$ denotes an arbitrary collections of vectors on $X;$
when $X$, $Y$ are disjoint, $\mathcal{K}_X + \mathcal{K}_Y$ is written as $\mnw{\mathcal{K}_X \oplus \mathcal{K}_Y}.$ 

The following lemma is a slightly modified version of a result (Problem 7.5) in \cite{HNarayanan1997}. Implicit Inversion Theorem is an immedate consequence.
\begin{lemma}
\label{lem:Kgenminor}
\begin{enumerate}
\item Let $\KSP,\KSQ $ be collections of vectors on $S \uplus P,S\uplus Q $  respectively.

Then there exists a collection of vectors
$\KPQ$ on $P\uplus Q$ s.t. ${\bf 0}_{PQ} \in \KPQ $ and $\KSP \lrar \KPQ = \KSQ,$ 

only if
$\KSP\circ S \supseteq \KSQ \circ S$
and
$ \KSP \times S \subseteq \KSQ \times S.$
\vspace{0.25cm}
\item Let $\KSP$ be a collection of vectors closed under subtraction  on $S \uplus P$ 

and let  $\KSQ$ be a collection of vectors on $S\uplus Q,$
closed under addition. 

Further let $ \KSP \times S \subseteq \KSQ \times S$ and $\KSP\circ S \supseteq \KSQ \circ S.$

Then the collection of vectors
$\KSP \lrar \KSQ,$ is closed under addition, with ${\bf 0}_{PQ}$ as a member 

and further we have that $\KSP \lrar (\KSP\lrar \KSQ)= \KSQ.$
\vspace{0.25cm}
\item Let $\KSP$ be a collection of vectors closed under subtraction, 
and let $\KSQ$ satisfy
the conditions, 

closure under addition, 
${\bf 0}_{SQ}\in \KSQ, \KSP \times S \subseteq \KSQ \times S$ and $\KSP\circ S \supseteq \KSQ \circ S.$

Then the equation
$$\KSP\lrar \KPQ = \KSQ,$$
where $\KPQ$ has to satisfy
closure under addition, 
has a unique solution under the condition
$$\KSP \times P \subseteq \KPQ \times P \ and\  \KSP\circ P \supseteq \KPQ \circ P.$$

If the solution $\KPQ$ does not satisfy these conditions, then there exists another solution,
i.e., $$\KSP\lrar \KSQ,$$ that satisfies these conditions.
\end{enumerate}
\end{lemma}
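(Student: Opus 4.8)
The plan is to handle the three parts in sequence, treating the whole result as the collection-theoretic version of the Implicit Inversion Theorem in which ``linear'' is replaced by the hypotheses ``closed under subtraction'' (so $\KSP$ is an abelian group under addition) and ``closed under addition''. The one algebraic fact I will lean on throughout is that, being closed under subtraction, $\KSP$ contains ${\bf 0}_{SP}$ and is closed under negation and addition; in particular, whenever $(g_S,f_P)$ and $(g_S',f_P)$ both lie in $\KSP$ we get $(g_S-g_S',{\bf 0}_P)\in\KSP$, so $g_S-g_S'\in\KSP\times S$. This ``cancellation'' observation is what lets me move mismatches off the common index.

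For part~1 (necessity) I would simply unwind the definition of $\lrar$. If $(g_S,h_Q)\in\KSQ=\KSP\lrar\KPQ$, a matching $f_P$ gives $(g_S,f_P)\in\KSP$, hence $g_S\in\KSP\circ S$; this yields $\KSQ\circ S\subseteq\KSP\circ S$. For the contraction inclusion, if $g_S\in\KSP\times S$ then $(g_S,{\bf 0}_P)\in\KSP$, and since ${\bf 0}_{PQ}\in\KPQ$ the matching variable $f_P={\bf 0}_P$ produces $(g_S,{\bf 0}_Q)\in\KSP\lrar\KPQ=\KSQ$, i.e.\ $g_S\in\KSQ\times S$. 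Two short computations finish part~1.

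For part~2 I would exhibit the candidate $\KPQ:=\KSP\lrar\KSQ$ and check three things. Closure under addition follows by adding $S$-witnesses: if $g_S,g_S'$ witness two members, then $g_S+g_S'$ witnesses their sum, using that $\KSP$ is a group and $\KSQ$ is closed under addition. Membership of ${\bf 0}_{PQ}$ holds because $\KSP$ a group gives ${\bf 0}_S\in\KSP\times S\subseteq\KSQ\times S$, so ${\bf 0}_S$ simultaneously witnesses $({\bf 0}_P,{\bf 0}_Q)$. The substantive claim is the identity $\KSP\lrar(\KSP\lrar\KSQ)=\KSQ$, proved by double inclusion. For $\supseteq$, given $(g_S,h_Q)\in\KSQ$, the hypothesis $\KSP\circ S\supseteq\KSQ\circ S$ supplies an $f_P$ with $(g_S,f_P)\in\KSP$, and then $g_S$ itself witnesses $(f_P,h_Q)\in\KSP\lrar\KSQ$. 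The inclusion $\subseteq$ is the crux: a member $(g_S,h_Q)$ of the left side carries a $P$-coordinate $f_P$ with $(g_S,f_P)\in\KSP$, together with a second $S$-witness $g_S'$ satisfying $(g_S',f_P)\in\KSP$ and $(g_S',h_Q)\in\KSQ$; the cancellation observation gives $(g_S-g_S',{\bf 0}_P)\in\KSP$, so $g_S-g_S'\in\KSP\times S\subseteq\KSQ\times S$, hence $(g_S-g_S',{\bf 0}_Q)\in\KSQ$, and adding this correction to $(g_S',h_Q)$ yields $(g_S,h_Q)\in\KSQ$. This step, where the group structure of $\KSP$ converts a $P$-matching into an $S$-contraction that $\KSQ$ can absorb, is exactly where both common-index hypotheses are consumed, and it is the place I expect to have to be most careful.

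For part~3 I would exploit self-duality under interchanging the common index. Solving $\KSP\lrar\KPQ=\KSQ$ for $\KPQ$ under the conditions $\KSP\times P\subseteq\KPQ\times P$ and $\KSP\circ P\supseteq\KPQ\circ P$ is literally part~2 with the labels $S$ and $P$ exchanged (and $Q$ fixed), since $\KSP$ is equally a group viewed on $P\uplus S$ and $\KPQ$ is assumed closed under addition. Hence for any closed-under-addition solution $\KPQ$ meeting those $P$-conditions, part~2 relabelled gives $\KPQ=\KSP\lrar(\KSP\lrar\KPQ)=\KSP\lrar\KSQ$, which is the uniqueness assertion. It remains to verify that $\KSP\lrar\KSQ$ itself meets the $P$-conditions: the inclusion $(\KSP\lrar\KSQ)\circ P\subseteq\KSP\circ P$ is immediate from the definition, and $\KSP\times P\subseteq(\KSP\lrar\KSQ)\times P$ uses the assumption ${\bf 0}_{SQ}\in\KSQ$, so that ${\bf 0}_S$ can serve as the witness. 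Thus $\KSP\lrar\KSQ$ is a solution satisfying the conditions (existence and identification), and the relabelling argument shows it is the only such solution (uniqueness); the closing sentence of the statement is just the conjunction of these two facts.
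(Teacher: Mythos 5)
Your proposal is correct and follows essentially the same route as the paper's own proof: the definitional unwinding for part~1, the candidate $\KSP\lrar\KSQ$ with the double inclusion driven by the subtraction/cancellation step $(f_S-f_S')\oplus{\bf 0}_P\in\KSP$ for part~2, and the relabelled application of part~2 for uniqueness in part~3. The only (minor, welcome) addition is your explicit verification that $\KSP\lrar\KSQ$ itself satisfies the two $P$-conditions, which the paper leaves implicit.
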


\begin{proof}
\begin{enumerate}
\item
Suppose $\KSP \lrar \KPQ = \KSQ$ and 
${\bf 0}_{PQ} \in \KPQ .$

It is clear
from the definition of the matched composition operation that
$\KSP \circ S \supseteq \KSQ\circ S.$  

Since 
${\bf 0}_{PQ} \in \KPQ,  $  if 
$\fS \oplus {\bf 0}_{P} \in \KSP,$ we must have that $\fS \oplus {\bf 0}_{Q} \in \KSQ.$

Thus, $\KSP \times S \subseteq \KSQ\times S.$ 
\vspace{0.5cm}
\item
On the other hand suppose
$ \KSP \times S \subseteq \KSQ \times S$ and $\KSP\circ S \supseteq \KSQ \circ S.$
\\
Let $\KPQ \equiv \KSP \lrar \KSQ,$ i.e., $\KPQ$ is the collection of all vectors $\fP\oplus \fQ $ s.t. for some
vector $\fS,$ 
$\fS\oplus \fQ \in \KSQ$, $\fS \oplus \fP \in \KSP.$  

Since $\KSP$ is closed under subtraction,
it  contains the zero vector, the negative of every vector in it and is closed
under addition. 

Since  ${\bf 0}_S \oplus{\bf 0}_P\in \KSP,$ we must have that   ${\bf 0}_S \in \KSP \times S$ and therefore   ${\bf 0}_S \in \KSQ\times S.$
It follows that 
${\bf 0}_S \oplus {\bf 0}_{Q} \in \KSQ. $ 

Hence, by definition of $\KPQ,
{\bf 0}_{PQ} \in \KPQ.$ Further, since both $\KSP, \KSQ,$ are closed under addition, so is $\KPQ$ since $\KPQ= \KSP \lrar \KSQ.$\\
Let $\fS\oplus \fQ \in \KSQ.$

Since $\KSP\circ S \supseteq \KSQ \circ S,$ for some $\fP,$ we must have that $\fS\oplus \fP \in \KSP.$
By the definition of $\KPQ,$ we have that $\fP\oplus \fQ \in \KPQ.$ 

Hence,
$\fS\oplus \fQ \in \KSP \lrar \KPQ.$ 

Thus,
$\KSP \lrar \KPQ \supseteq \KSQ.$

\vspace{0.5cm}
Next, let $\fS\oplus \fQ\in \KSP \lrar \KPQ,$ i.e., for some $\fP, \fS \oplus \fP \in \KSP$ and $\fP\oplus \fQ \in \KPQ.$

We know, by the definition of $\KPQ,$ that there exists $\fS'\oplus \fQ \in \KSQ$ s.t. $\fS' \oplus \fP \in \KSP.$

Since $\KSP$ is closed under subtraction, we must have, $(\fS - \fS') \oplus
{\bf 0}_{P} \in \KSP.$  

Hence, $\fS - \fS' \in \KSP \times S
\subseteq \KSQ\times S.$  

Hence $(\fS - \fS') \oplus
{\bf 0}_{Q} \in \KSQ.$ 

Since $\KSQ$ is closed under addition and
$\fS'\oplus \fQ \in \KSQ$,

it follows that $(\fS - \fS')\oplus {\bf 0}_{Q} + \fS'\oplus \fQ = \fS \oplus \fQ$ also
belongs to $\KSQ$.  

Thus, $\KSP \lrar \KPQ \subseteq \KSQ.$

\vspace{0.5cm}
\item From parts (1) and (2) above, the equation can be satisfied by some $\KPQ$ if and only if 
$\KSP \circ S \supseteq \KSQ\circ S$ and  $\KSP \times S \subseteq \KSQ\times S.$ 

Next, let $\KPQ$ satisfy the equation  $\KSP\lrar \KPQ =\KSQ $ and be closed under addition. 

From part (2), we know that if $\KPQ$ satisfies $\KSP \circ P \supseteq \KPQ\circ P$ and  $\KSP \times P \subseteq \KPQ\times P,$

then $\KSP\lrar (\KSP\lrar \KPQ) =\KPQ.$ But $\KPQ$ satisfies $\KSP\lrar \KPQ =\KSQ.$

It follows that for any such $\KPQ,$ we have $\KSP\lrar \KSQ=\KPQ.$ 

This proves that $\KSP\lrar \KSQ$
is the only solution to the equation $\KSP\lrar \KPQ =\KSQ, $ under the condition $\KSP \circ P \supseteq \KPQ\circ P$ and  $\KSP \times P \subseteq \KPQ\times P.$ 

\end{enumerate}
\end{proof}
\begin{remark}
We note that the  collections of vectors in Lemma \ref{lem:Kgenminor} can be over rings rather than over fields- in particular over the ring of integers. Also only $\KSP$ has to be closed over subtraction.
The other two  collections $\KPQ,\KSQ$ are  required only to be closed over addition with a zero vector as a member. In particular 
\begin{itemize}
\item $\KSP$ could be a vector space over rationals while
 $\KPQ,\KSQ$  could be cones over rationals. 
\item $\KSP$ could be a module  over integers while
 $\KPQ,\KSQ$  could be over integers and but  satisfy homogeneous linear inequality constraints.
\end{itemize}

In all these cases the proof would go through 
 without change.
\end{remark}
\section{Proof of Lemma \ref{lem:1} and Lemma \ref{lem:2}}
In the proofs  below, the definitions of intersection and sum of vector spaces
are as given in Section \ref{sec:Preliminaries}.

We need the following simple results in the proofs of the lemmas.
\begin{enumerate}
\item Let $\Vab^1 ,\Vab^2$ be vector spaces on $A\uplus B$
over $\mathbb{F}.$ Then
$$(\Vab^1\cap \Vab^2)\times A=(\Vab^1\times A)\cap (\Vab^2\times A).$$
\item Let $\Vabc $ be a vector space on $A\uplus B\uplus C.$
Then
$$(\Vabc\times AB)\circ B = (\Vabc\circ BC)\times B.$$

\end{enumerate}

\begin{proof} (Lemma \ref{lem:1})
1.
Observe that 
the solution spaces of equations \ref{eqn:state11}, \ref{eqn:state22}, are respectively   $\Vwdwuy\circ W\dw U$
and  $\Vwdwuy\circ WY.$\\
Further,
$(f_W,g_{\dw},f_U,f_Y)\in \Vwdwuy$ iff
  $(f_W,g_{\dw},f_U)$ belongs to the 
solution space of equation \ref{eqn:state11} and \\
$(f_W,f_Y)$
belongs to the solution space of equation \ref{eqn:state11}.\\
Equivalently, $\Vwdwuy=(\Vwdwuy\circ W\dw U)\cap (\Vwdwuy\circ WY).$\\
Since  $w$ is free in
the solution space of the
equation \ref{eqn:state22} and rows of $C$ are linearly independent
this means that $y$ is free in the solution space of equation \ref{eqn:state22}, i.e., in $\Vwdwuy\circ WY.$\\
Let $y=f_Y.$
We will show that for some 
$f_U,$ $(f_U,f_Y)\in (\Vwdwuy \cap \Vwdw^2)\circ UY.$
It will follow that $y$ is free in $(\Vwdwuy \cap \Vwdw^2)\circ UY.$\\
For some $f_W$
we have that  $(f_W,f_Y)\in  \Vwdwuy\circ WY  = \mbox{solution space of equation \ref{eqn:state22}} .$\\
Since $w$ is free in $\Vwdw^2,$
there exists  some $g_{\dw}$ such that 
$(f_W,g_{\dw})\in \Vwdw^2\subseteq \Vwdwuy\circ W\dw.$\\
Therefore there exists some $f_U$ such that $(f_W,g_{\dw},f_U) \in \Vwdwuy\circ W\dw U .$
\\
Now $(f_W,f_Y)\in \Vwdwuy\circ WY$ and 
$(f_W,g_{\dw},f_U) \in \Vwdwuy\circ W\dw U .$
Therefore,
$(f_W,g_{\dw},f_U,f_Y)\in \Vwdwuy.$\\ 
Since $(f_W,g_{\dw})\in \Vwdw^2,$ it follows that $(f_W,g_{\dw},f_U,f_Y)\in\Vwdw^2\oplus \F_U\oplus \F_Y.$\\
Thus $(f_W,g_{\dw},f_U,f_Y)\in \Vwdwuy \cap \Vwdw^2$ and therefore
$(f_U,f_Y)\in (\Vwdwuy \cap \Vwdw^2)\circ UY.$

2. By definition, $\Vwdwuy\leftrightarrow \Vwdw^2$ is 
$(\Vwdwuy\cap \Vwdw^2)\circ UY.$ Since $y$ is free in $(\Vwdwuy \cap  \Vwdw^2)\circ UY,$ it is also free in $(\Vwdwuy \leftrightarrow  \Vwdw^2)\circ UY\circ Y,$
i.e., 
 $(\Vwdwuy\leftrightarrow \Vwdw^2)\circ Y=\F_Y.$
\end{proof}

\begin{proof} ((Lemma \ref{lem:2})
1. $\Vwdwuy\times \wdw $ is the  solution space of $\dsw=Aw,0=Cw.$ Since this is contained in
the solution space of $\dsw=\hat{A}w,$ we must have that the solution space of
$\dsw=Aw,0=Cw,$ is the same as the solution space of $\dsw=Aw,0=Cw,\dsw=\hat{A}w.$
Thus $Aw= \hat{A}w$ when $Cw=0.$ Suppose $y$ is set to zero in the solution space of
Equations \ref{eqn:state11}, \ref{eqn:state22}.
We then have $Cw=0$ so that $\dsw=Aw=\hat{A}w.$
But $\dsw=Aw+Bu$ and columns of $B$ are linearly independent.
We conclude that $u=0$ as required.

Thus if $(f_W,f_{\dw},f_U,0_Y)\in \Vwdwuy,$ we must have that
$f_U=0_U$ and therefore that $(f_W,f_{\dw})\in \Vwdwuy\times W\dw.$
It follows that  $\Vwdwuy\times W\dw U= \Vwdwuy\times W\dw\oplus 0_U.$

2. We have $$(\Vwdwuy\leftrightarrow \Vwdw^2)\times U=((\Vwdwuy\cap \Vwdw ^2)\circ UY)\times U
= ((\Vwdwuy\cap \Vwdw ^2)\times W\dw U)\circ U.$$
Now $\Vwdwuy\cap \Vwdw ^2=\Vwdwuy\cap (\Vwdw ^2\oplus \F_U\oplus \F_Y)$
so that $(\Vwdwuy\cap \Vwdw ^2)\times W\dw U$
$$=(\Vwdwuy\cap (\Vwdw ^2\oplus \F_U\oplus \F_Y))\times W\dw U= (\Vwdwuy\times W\dw U)\cap ((\Vwdw ^2\oplus \F_U\oplus \F_Y)\times W\dw U)).$$
By the previous part of this lemma, 
$\Vwdwuy\times W\dw U=\Vwdwuy\times W\dw\oplus 0_U$
and by the hypothesis of this lemma
$\Vwdwuy\times W\dw\subseteq \Vwdw ^2.$
Therefore, $\Vwdwuy\times W\dw U\subseteq (\Vwdw ^2\oplus \F_U\oplus \F_Y)\times W\dw U.$  
Hence, $(\Vwdwuy\cap \Vwdw ^2)\times W\dw U=\Vwdwuy\times W\dw\oplus 0_U.$
Therefore, $(\Vwdwuy\leftrightarrow \Vwdw^2)\times U$ 
$$=((\Vwdwuy\cap \Vwdw ^2)\circ UY)\times U= ((\Vwdwuy\cap \Vwdw ^2)\times W\dw U)\circ U
=(\Vwdwuy\times W\dw\oplus 0_U)\circ U=0_U.$$

\end{proof}
\section{Numerical experiments}
\label{sec:exp}
A simpler version of Algorithm I would be  to replace the inner loop of the algorithm, where rank one update is attempted
with each column of $B,$  by random linear combinations of the columns of $B.$
This simpler version  was coded in the open source computational package Scilab.

Two different experiments were performed.

The purpose of the first experiment was to check whether, given $A,B,C$ and 
an annihilating polynomial $\boldsymbol{b}(s)$
 of  matrix $A+BKC,$ 
by using Algorithm I, but without knowledge of $K,$ we can find a  matrix\\ $A+BK_{final}C,$ whose 
 annihilating polynomial  $\boldsymbol{d}_{final}(s)$
is close to $\boldsymbol{b}(s).$ If the  polynomials
are close, it would support the claim that MIMO case
can be solved by repeated rank one updates.

An additional purpose was to check if we can  clearly distinguish the above from the case where annihilating polynomial $\boldsymbol{b}(s)$
does not correspond to a matrix of the form $A+BKC.$

The purpose of the second experiment was to examine, whether the algorithm
can be practically used to alter eigenvalues using only output feedback.

\subsection{First experiment}
The experiment was performed with $n=20,m=3,p=3.$ 
It is known that for $m\times p > n,$ output feedback is 
generically feasible to put all eigenvalues in left half of the complex plane.
Hence lower values of $m$ and $p,$ were chosen.
The polynomial $\boldsymbol{d}(s)\equiv d_0s^n+\cdots +d_n$ 
was generated randomly with integer coefficients
between $-2$ and $+2,$ with $d_0=1 $ and $d_n$ randomly as either $+1$ or $-1.$
The matrix $A$ was taken to be in the companion form with
its annihilating polynomial as $\boldsymbol{d}(s).$
Matrices $B,C$ were picked as integer matrices with entries chosen randomly
between $-10$ and $+10.$
The matrix $K$ was chosen with entries randomly between $-0.001$ and $0.001.$ 


Fifty repetitions of the first experiment, as described below, were performed using Algorithm I. Each repetition involved a different randomly chosen
quadruple of matrices $A,B,C,K.$\\ 
The annihilating polynomial 
$\boldsymbol{b}(s)$ was 
computed for $A+BKC.$ (Thus we know that $\boldsymbol{b}(s)$ is reachable.)
The desired polynomial was fixed as $\boldsymbol{b}(s).$
The polynomial $\boldsymbol{d}(s)$ (annihilating polynomial of $A$) was updated, 
through repeated rank one updates corresponding to random linear combinations of $col(B),$
to $\boldsymbol{d}_{final}(s)$
which corresponded to the updated matrix $A+BK_{final}C.$
The  rank one update was performed for $1000$ iterations and for
$1500$ iterations reaching respectively $\boldsymbol{d}_{1000}(s)$
and $\boldsymbol{d}_{1500}(s),$

Table $\frac{\|\boldsymbol{d}_{final}-\boldsymbol{b}\|}{\|\boldsymbol{b}\|}$
indicates how many of the repetitions gave values of $\frac{\|\boldsymbol{d}_{final}-\boldsymbol{b}\|}{\|\boldsymbol{b}\|}$ within the intervals
$10^{-15} \ to \ 10^{-14},
\cdots , 10^{-8} \ to \ 10^{-7}$ after $1000$ and $1500$ rank one update iterations. 

It is clear that we are able to reach quite close to the desired annihilating polynomial in all the repetitions.
\\

Table $\frac{\|(A+BK_{final}C)-(A+BKC)\|}{\|A\|}$
indicates how many of the repetitions gave values of 
$\frac{\|(A+BK_{final}C)-(A+BKC)\||}{\|A\|}$
within the intervals
$10^{-15} \ to \ 10^{-14},
\cdots , 10^{-8} \ to \ 10^{-7}$ after $1000$ and $1500$ rank one update iterations.

Note that $K$ was originally generated randomly.
The rank one update method keeps adding to the $m\times p$ zero matrix
a series of rank one matrices. It is  surprising that we actually reach very close to the
unknown $K,$
since there is no reason to believe that 
the annihilating polynomial uniquely fixes $K$ in the MIMO case.

%
%
%
%
%
%

\begin{table}
\parbox{.45\linewidth}{
\centering
\begin{tabular}{ccc}
\hline\hline

Interval &  $1000$ iter & $1500$ iter  \\ [0.5ex]
\hline
$10^{-15}\ to\  10^{-14}$&$1$ &$1$\\
$10^{-14}\ to\  10^{-13}$&$2$ &$5$\\
$10^{-13}\ to\  10^{-12}$&$12$ &$12$\\
$10^{-12}\ to \ 10^{-11}$&$15$ &$18$\\
$10^{-11}\ to \ 10^{-10}$&$12$ &$9$\\
$10^{-10}\ to \ 10^{-9}$&$7$ &$5$\\ 
$10^{-9}\ to \ 10^{-8}$&$0$ &$0$\\ 
$10^{-8}\ to \ 10^{-7}$&$1$ &$0$\\ [1ex]
\hline
\end{tabular}
\caption{$\frac{\|\boldsymbol{d}_{final}-\boldsymbol{b}\|}{\|\boldsymbol{b}\|}$ }
}
\hfill
\label{tab:2}
\parbox{.45\linewidth}{
\centering
\begin{tabular}{ccc}
\hline\hline

Interval &  $1000$ iter & $1500$ iter  \\ [0.5ex]
\hline
$10^{-15}\ to\  10^{-14}$&$1$ &$1$\\
$10^{-14}\ to\  10^{-13}$&$0$ &$0$\\
$10^{-13}\ to\  10^{-12}$&$7$ &$8$\\
$10^{-12}\ to \ 10^{-11}$&$15$ &$16$\\
$10^{-11}\ to \ 10^{-10}$&$13$ &$14$\\
$10^{-10}\ to \ 10^{-9}$&$11$ &$10$\\
$10^{-9}\ to \ 10^{-8}$&$2$ &$1$\\
$10^{-8}\ to \ 10^{-7}$&$1$ &$0$\\ [1ex]
\hline
\end{tabular}
\caption{$\frac{\|(A+BK_{final}C)-(A+BKC)\|}{\|A\|}$ }
}
\end{table}

{\it What happens when the desired annihilating polynomial 
cannot be reached?}

It is known that for $m\times p <n,$  the output feedback
problem is generically infeasible.

Since we chose $n,m,p$ such that $m\times p <n,$ a randomly chosen
$\boldsymbol{b}$ will correspond to an unreachable annihilating polynomial.

Fifty repetitions of the following experiment were performed.

The annihilating polynomial $\boldsymbol{d}(s)\equiv d_0s^n+\cdots +d_n$
was generated randomly with integer coefficients
between $-2$ and $+2,$ with $d_0=1 $ and $d_n$ randomly as either $+1$ or $-1.$
The desired annihilating polynomial
$\boldsymbol{b}(s)$ was obtained by adding a polynomial $\boldsymbol{\epsilon}(s)$
to $\boldsymbol{d}(s).$ The coefficients of this latter polynomial
were picked at random between $-0.01$ and $+0.01.$
Further, $\epsilon(0),\epsilon(n)$ were set to zero.
The matrix $A$ was taken to be in the companion form with
its annihilating polynomial as $\boldsymbol{d}(s).$ 
Matrices $B,C$ were picked as integer matrices with entries chosen randomly
between $-10$ and $+10.$
The rank one update was performed for $1000$ iterations.

The rank one update method yielded vectors $\boldsymbol{d}_{newj}$
such that $\frac{\|\boldsymbol{d}_{newj}-\boldsymbol{b}\|}{\|\boldsymbol{b}\|}$ 
converged to a value
greater than $\delta >0.02.$ 
However even while this convergence takes place, $\boldsymbol{d}_{newj}$ cycled through a set of values (each of which can be taken as reachable).

Next, the desired polynomial was kept as a polynomial $\boldsymbol{d}_{new}(s)$
which was reached through the above sequence of rank one updates
 while starting with matrix $A$ with its  annihilating polynomial
$\boldsymbol{d}(s).$
The rank one update was performed for $1000$ iterations 
reaching  $\boldsymbol{d}_{1000}(s)$
as the annihilating polynomial of the updated matrix $A+BK_{1000}C.$

The 
minimum and maximum values of 
$\frac{\|\boldsymbol{d}_{1000}-\boldsymbol{d}_{new}\|}{\|\boldsymbol{d}_{new}\|}$
 over 
the $50$ repetitions are given
below.
$$\frac{\|\boldsymbol{d}_{1000}-\boldsymbol{d}_{new}\|}{\|\boldsymbol{d}_{new}\|}(min,max)=(6.737\times 10^{-22}, 1.057\times 10^{-12}
)$$

Note that in the randomly chosen $\boldsymbol{b}(s)$ case
with $1000$ iterations 
$$\frac{\|\boldsymbol{d}_{final}-\boldsymbol{d}_{desired}\|}{\|\boldsymbol{d}_{desired}\|}\equiv \frac{\|\boldsymbol{d}_{newj}-\boldsymbol{b}\|}{\|\boldsymbol{b}\|}>0.02,$$\\
whereas in the reachable $\boldsymbol{d}_{new}(s)$ case with $1000$ iterations $$\frac{\|\boldsymbol{d}_{final}-\boldsymbol{d}_{desired}\|}{\|\boldsymbol{d}_{desired}\|}\equiv \frac{\|\boldsymbol{d}_{1000}-\boldsymbol{d}_{new}\|}{\|\boldsymbol{d}_{new}\|}<10^{-11}.
$$
 
We concluded that the rank one update method can indeed distinguish 
between reachable and unreachable annihilating polynomials.

\subsection{Second experiment}
Suppose $A,B,C$ are such that for some unknown $K,$
the eigenvalues of $A+BKC$ lie in the left half of the complex plane.
Can Algorithm I help us find such a $K?$

For conducting the experiment, we first needed to build a feasible triple $A,B,C$ for which such $K$
can be found. For this, we began with $A'$ whose eigenvalues were in the left half plane
and picked random integral $B,C$ as in the first experiment.
We picked the desired annihilating polynomial to have some roots in the right half
complex plane.
We used Algorithm I and increased the number of iterations 
until the annihilating polynomial of the current matrix $A'+BK'C,$ had some roots in the right half
complex plane also. 

The resulting $A'+BKC$ was taken as the matrix $A$
and $B,C$ were left unchanged. The experiments were performed on this triple $A,B,C,$
with $\boldsymbol{d}(s)$ denoting  the annihilating  polynomial of $A.$

We worked with $n=10, m=2,p=3.$ 

Let $\boldsymbol{d}_{new}(s)$  be the current annihilating polynomial.
Initially $\boldsymbol{d}_{new}(s)\equiv\boldsymbol{d}(s).$
Let $droot_{new}$ be the roots of $\boldsymbol{d}_{new}(s).$
We built $broot_{new}$ 
by altering $droot_{new}$ using the following rule:
 
 if $real(droot_{new}(i))>0$  then $$ broot_{new}(i)=droot_{new}(i)-a*abs(droot_{new}(i))-b*(norm(droot_{new}))$$
$$\mbox{else} \ \ broot_{new}(i)=droot_{new}(i)$$
and also the same rule but with the `if $real(droot_{new}(i))>0$ '
condition omitted.

We then took the current desired annihilating polynomial to be
the polynomial $\boldsymbol{b}_{new}(s)$ whose roots were the $broot_{new}(i)
, \ i=1, \cdots n,$
and did a rank one update of $\boldsymbol{d}_{new}(s).$

Algorithm I was performed with  number of iterations between $10$
and $200.$
The values of $a,b$ were varied respectively between $.1$ to $.0001$ 
and $.001$ to $.000001$ as the iterations proceeded,
depending on how close the current values of $droot_{new}(i)$ were to 
the imaginary axis when the real part was positive.

We worked with $10$ random instances of the problem. In each case 
we could reach a $\boldsymbol{d}_{new}(s)$ with roots entirely in the 
left half plane, 
after a total of
less than $200$ iterations.
 
\bibliographystyle{elsarticle-num}
\bibliography{references}

\end{document}